\documentclass[11pt]{article}
\usepackage{amsmath, amsfonts, amssymb, amsthm}
\usepackage{hyperref}
\usepackage{mathtools}
\usepackage{bm}
\usepackage{mathrsfs}
\usepackage{color}
\mathtoolsset{showonlyrefs} 
\usepackage{booktabs}
\usepackage{enumerate}
\usepackage{graphics} 
\usepackage[numbers]{natbib} 
\usepackage{geometry}
\geometry{
 a4paper, 
 left=2.9cm,
 right=2.9cm,
 top=3cm,
 bottom=3cm
}
\linespread{1.3}
\usepackage{caption} 
\captionsetup{font=small}

\newcommand{\dd}{\mathrm{d}}
\newcommand{\bbE}{\mathbb{E}}
\newcommand{\bbR}{\mathbb{R}}
\newcommand{\cX}{\mathcal{X}}
\newcommand{\cF}{\mathcal{F}}
\newcommand{\cL}{\mathcal{L}}
\newcommand{\cQ}{\mathscr{Q}}
\newcommand{\ind}{\mathsf{1}} 
\newcommand{\cA}{\mathcal{A}}

\newcommand{\cB}{\mathcal{B}}

\newcommand{\cT}{\mathcal{T}}
\newcommand{\sA}{\mathscr{A}}

\newcommand{\Vmax}{V_{\mathrm{max}}}

\DeclareFontFamily{U}{mathx}{}
\DeclareFontShape{U}{mathx}{m}{n}{<-> mathx10}{}
\DeclareSymbolFont{mathx}{U}{mathx}{m}{n} 
\DeclareMathAccent{\widecheck}{0}{mathx}{"71}

\newtheorem{lemma}{Lemma}
\newtheorem{corollary}{Corollary}
\newtheorem{proposition}{Proposition}
\newtheorem{theorem}{Theorem}
\theoremstyle{definition}
\newtheorem{definition}{Definition}

\newtheorem{remark}{Remark}

\begin{document}

\title{From Minimax Optimal Importance Sampling to Uniformly Ergodic Importance-tempered MCMC} 

\author{Quan Zhou \\ Department of Statistics, Texas A\&M University}
\date{}

\maketitle

\begin{abstract}
We make two closely related theoretical contributions to the use of importance sampling schemes. 
First, for independent sampling, we prove that the minimax optimal trial distribution coincides with the target if and only if  the target distribution has no atom with probability greater than $1/2$, where ``minimax'' means that the worst-case asymptotic variance of the self-normalized importance sampling estimator is minimized.   
When a large atom exists, it should be downweighted by the trial distribution. 
A similar phenomenon holds for a continuous target distribution concentrated on a small set.   
Second,  we argue that it is often advantageous to run the Metropolis--Hastings algorithm with a tempered stationary distribution, $\pi(x)^\beta$, and  correct for the bias by importance weighting. 
The dynamics of this ``importance-tempered'' sampling scheme can be described by a continuous-time Markov chain. We prove that for one-dimensional targets with polynomial tails, $\pi(x) \propto (1 + |x|)^{-\gamma}$, 
this chain is uniformly ergodic if and only if $1/\gamma < \beta < (\gamma - 2)/\gamma$. These results suggest that for target distributions with light or polynomial tails of order $\gamma > 3$, importance tempering can improve the precision of time-average estimators and essentially eliminate  the need for burn-in.  \\
\textit{Keywords:} continuous-time Markov chain, drift condition, heavy-tailed distribution, Metropolis--Hastings algorithm, optimal proposal, self-normalized importance sampling, uniform ergodicity, variance reduction. 
\end{abstract}

\section{Introduction} \label{sec:intro}
 
\subsection{Importance sampling and MCMC}
Let $\Pi$ be a  probability distribution, and define $\Pi(f) \coloneqq \int f \dd \Pi$ for each integrable function $f$.  
Two popular approaches to numerically approximating  $\Pi(f)$ are importance sampling and Markov chain Monte Carlo (MCMC) sampling. 
In importance sampling, i.i.d. samples $(X_i)_{i \geq 1}$ are drawn from a trial distribution $Q$, and $\Pi(f)$ is estimated by a weighted average of $f(X_i)$. Normalizing the weights  yields the self-normalized importance sampling estimator  
\begin{equation}\label{eq:def-snis}
     \widetilde{\Pi}_{Q, n}(f)  \coloneqq \frac{ \sum_{i=1}^n f(X_i) w (X_i) }{ \sum_{i=1}^n w(X_i) }, 
\end{equation}
where $w = \dd \Pi / \dd Q$ denotes the Radon-Nikodym derivative. 
In MCMC, one simulates a Markov chain converging to $\Pi$ and estimates $\Pi(f)$ by the unweighted time average. 

The two methods can be combined in various scenarios, 
including using samples from a single MCMC trajectory to estimate integrals or normalizing constants of multiple probability distributions~\citep{meng1996simulating,buta_doss_2011,tan2015honest}, recycling samples generated at higher temperatures in parallel or simulated tempering~\citep{gramacy2010importance} or even rejected moves in Metropolis--Hastings algorithms~\citep{rudolf2020metropolis, schuster2020markov}, 
employing Markov chain transitions in the trial-generating scheme for importance sampling~\citep{neal2001annealed, llorente2022mcmc}, and embedding importance sampling within pseudo-marginal MCMC methods for estimating the marginal likelihood of a doubly intractable posterior distribution~\citep{andrieu2009pseudo, vihola2020importance}. 

This work is motivated by another scenario where,  though MCMC could be run on the true target distribution $\Pi$, one deliberately modifies the stationary distribution of the MCMC sampler and corrects for the resulting bias by importance weighting. That is,  we estimate $\Pi(f)$ using the estimator~\eqref{eq:def-snis}, where $(X_i)_{i \geq 1}$ is a Markov chain converging to $Q$. 
Examples include the dynamic weighting method~\citep{liu2001theory, liang2002dynamically} and recent developments on the informed MCMC samplers~\citep{zanella2019scalable, rosenthal2021jump, zhou2022rapid}. 
These methods have high acceptance probabilities or can even be entirely rejection-free; for instance,~\citet{li2023importance} showed that one can always accept  multiple-try proposals and correct for the bias without extra computational cost. 
Theoretical results demonstrating rapid convergence have also been obtained, but they only apply to discrete targets and are partially attributable to the properties of the underlying statistical model~\citep{zanella2019scalable, zhou2022rapid}. Adoption of such methods in applied settings appears limited. In our experience, when sampling from the target distribution is straightforward to implement, practitioners rarely employ importance sampling to perturb the target as this seems unlikely to offer benefits. 

In this work,  we focus on a single target distribution $\Pi$ and assume access to a sampler that can be applied to either a trial distribution $Q$ or $\Pi$ itself. Under both independent and Markov chain sampling settings, we investigate whether a proper choice of $Q$ can provide significant advantages over direct  sampling from $\Pi$. 
 
\subsection{Results for independent importance sampling}\label{sec:intro-iid}
 
Consider drawing i.i.d. samples from $\Pi$ or a trial distribution $Q$. 
For each fixed non-constant function $f$, it is well known that there exists some  $Q$ such that the resulting importance sampling estimator has smaller asymptotic variance than the average of i.i.d. samples from $\Pi$~\citep{mcbook}. 
However, in many applications the goal is not to estimate $\Pi(f)$ for one or several prespecified choices of $f$, but rather to learn $\Pi(f)$ for a large collection of functions. For example, in Bayesian statistics, sampling methods are often used to generate an approximation to the entire posterior distribution of unknown parameters. Various functionals of this distribution are of interest, such as moments, tail probabilities, and predictive quantities of future observations that may not be known at the time of sampling. 
In such scenarios,  a widely adopted ``rule of thumb'' is to pick a trial distribution whose shape closely matches that of $\Pi$. 
 
We adopt a minimax perspective that aims to minimize the worst-case asymptotic variance of $\widetilde{\Pi}_{Q, n}(f)$.  
In Section~\ref{sec:def}, we formally introduce this problem, and in Section~\ref{sec:minimax} we study which choice of $Q$ minimizes the maximum asymptotic variance over all functions with unit variance under $\Pi$. 
This optimality criterion is very natural but has received little attention in the literature. One possible reason is that  it may appear to be of limited value: 
as we prove in Theorem~\ref{th:opt-Q-no-atom} (see also Lemma~\ref{lm:loss-L2}), the optimal trial distribution is just $\Pi$ itself when $\Pi$ has no atoms,  which merely confirms the rule of thumb discussed above. 
However, we show in Theorem~\ref{th:discrete} that this intuition about the optimality of $\Pi$ fails substantially when $\Pi$ has an atom with large probability mass. Roughly speaking, if there exists some $x^*$ such that $\Pi(\{x^*\}) = 1 - \epsilon$ for some small $\epsilon > 0$, one can construct a trial distribution with $O(\epsilon)$ worst-case asymptotic variance, which outperforms $\Pi$ significantly. 
In this ``large atom'' regime, the key is to keep the proposal probability of $x^*$ at a moderate level. 
Moreover,  Theorem~\ref{th:opt-Q-no-atom}  and Theorem~\ref{th:discrete} together cover all possible choices of $\Pi$, and they immediately imply that the optimal trial is $\Pi$ itself if and only if $\Pi$ has no atom with probability  strictly greater than $1/2$. 
The main challenge in deriving these results is to conjecture the optimal trial in each setting, and once the correct form is identified, we can verify the optimality using elementary arguments. 
In Section~\ref{sec:multi}, we extend our result to importance sampling with multiple trial distributions. Proposition~\ref{th:multi} proves that one cannot reduce the  worst-case asymptotic variance  by taking a weighted average of two self-normalized importance sampling estimators constructed with different trial distributions. 

For a continuous target distribution $\Pi$, the minimax optimal trial distribution is always $\Pi$ itself since it has no atoms. However, if $\Pi$ is concentrated on a very small set $A$, the intuition from the large atom regime still applies as long as the function does not exhibit very large variation over the set $A$.  This is particularly useful when $\Pi$ is the posterior distribution for a Bayesian statistical model with sufficiently large sample size.  
Explicitly, letting $\pi$ denote the density of $\Pi$, our result in Section~\ref{sec:extension} suggests that it is beneficial to use the trial density 
\begin{equation}\label{eq:opt-prop}
q(x) = \frac{c \pi(x)}{\Pi(A)}  \ind_A(x) +  \frac{ (1- c)\pi(x)}{  1 - \Pi(A)}  \ind_{A^c}(x), 
\end{equation}
where $A$ is the set such that $\Pi(A) \approx 1$ and $c$ is a constant such that $1 - \Pi(A) < c < \Pi(A)$; in the large atom regime, $A$ is simply the largest atom, and the optimal value for $c$ is $1/2$.  
 
\subsection{Results for Markov chain importance sampling} \label{sec:intro-mcis}

In most applications, the minimax optimal trial distribution cannot be readily used, since it requires  knowledge about where $\Pi$ is concentrated or which atom has the largest probability mass. 
Moreover, when $\Pi$ is a posterior distribution arising from a complex Bayesian model with an unknown normalizing constant,  independent sampling from $\Pi$ or from any trial distribution   substantially overlapping with $\Pi$ is usually infeasible. 
In such settings, it is more practical to assume that  samples from $\Pi$ and trial distributions are obtained via an MCMC sampler. 

We assume that the trial density $q$ is obtained from tempering the target distribution; that is, it takes the form $q(x) \propto \pi(x)^\beta$, where the constant $\beta \in (0, 1)$ is often known as the inverse temperature parameter. In this case, the self-normalized importance sampling estimator is given by~\eqref{eq:def-snis} with $w(x) \propto \pi(x)^{1 - \beta}$.  
We follow~\citet{gramacy2010importance} and~\citet{zanella2019scalable} to call such a Markov chain importance sampling scheme ``importance-tempered MCMC''.
Though this choice of $Q$ differs from the minimax optimal trial distributions, it still downweights the region where $\pi(x)$ is large, which is consistent with the key insight obtained from our minimax analysis. 
Moreover, in almost every applied setting, the same MCMC algorithm can be used to sample from $\Pi$ or its tempered versions with same computational cost and implementation effort, except that for tempered targets one needs to keep track of un-normalized importance weights in order to compute importance sampling estimators. 
This importance tempering idea is certainly not new and can be at least traced back to~\citet{Clifford1993Discussion}.  

The asymptotic variance of  $\widetilde{\Pi}_{Q, n}(f) $ with Markov chain samples $(X_i)_{i \geq 1}$ is much more challenging to analyze than that with i.i.d. samples, since it now depends on the chain's convergence rate. A convenient and fruitful strategy is to approximate $\widetilde{\Pi}_{Q, n}(f) $  by the unweighted time average of a continuous-time Markov chain $(Y_t)_{t \geq 0}$ that has $(X_i)_{i \geq 1}$ as the embedded chain and 
an exponential holding time at $X_i$ with mean $w(X_i)$; this is reviewed in Section~\ref{sec:mcis-intro}.  
To analyze the behavior of the chain $(Y_t)_{t \geq 0}$, we develop in Section~\ref{sec:drift} a general drift condition argument for one-dimensional target distributions when $(X_i)_{i\geq 1}$ is generated from a random walk Metropolis--Hastings algorithm. 
Different from existing drift condition analysis of Metropolis--Hastings algorithms, we consider drift functions $V \geq 1$ (i.e., Foster-Lyapunov functions) that are bounded and concave on either side of the mode of $\Pi$. We show in Theorem~\ref{th:drift-general} that any such function $V$ yields a state-dependent drift condition with drift rate approximately proportional to  $|V''(x)| / w(x)$, suggesting that the importance weight $w(x) \propto \pi(x)^{1-\beta}$ forces the chain to quickly leave the tails of $\Pi$. 
We use this argument to prove uniform ergodicity of the continuous-time chain $(Y_t)_{t \geq 0}$ for targets with super-exponential and polynomial tails in Proposition~\ref{coro:exp} and Proposition~\ref{coro:poly}, respectively.   
In Theorem~\ref{th:tail}, we further strengthen our result by  proving that for targets of the form $\pi(x) \propto (1 + |x|)^{-\gamma}$, $(Y_t)_{t \geq 0}$ is uniformly ergodic if and only if $1/\gamma < \beta < (\gamma - 2) / \gamma$. Clearly, such $\beta$ exists if and only if $\gamma > 3$. 
The necessity of this tail index condition is proved by using a different drift condition argument. 
In contrast to our result, it is known that random walk Metropolis--Hastings algorithms on $\bbR$ cannot be uniformly ergodic, and they are geometrically ergodic only if the target has exponential tails~\citep{mengersen1996rates}. 

One important consequence of uniform ergodicity is that the initial state of the chain has minimal impact on its long-run time averages, which means that one can initialize the importance-tempered  Metropolis--Hastings algorithm at any state and compute time-average estimators without burn-in (that is, the whole trajectory is used.)   
In Section~\ref{sec:sim-uniform}, we provide a simple numerical example illustrating this phenomenon.   
We also perform a simulation study in Section~\ref{sec:sim-itmh} on the decay rate of the variance of $\widetilde{\Pi}_{Q, n}(f)$   under different initialization schemes.  
The results   align precisely with what our minimax theory predicts: as long as $\beta$ is not too small, importance tempering reduces the estimator's variance for most functions $f$, and the gain can be very substantial when the variation of $f$ within the high-probability region is small compared to that in the tails.

\subsection{Other related works}\label{sec:literature}
For independent importance sampling, the optimal trial distribution has been well studied when there are one or multiple fixed functions of interest; see the recent survey of~\citet{llorente2025optimality}.  
The minimax perspective has been adopted in~\citet{buta_doss_2011} and~\citet{roy2024selection}, but they considered multiple  target distributions and required the trial  to be chosen from a certain class. In their settings, the optimal trial distribution has no closed-form expression, and the focus is on developing practical algorithms for designing efficient trial schemes. 

The use of tempered target distributions is very common in parallel and simulated tempering methods, but few works have considered applying the self-normalized importance sampling estimator with  a single MCMC chain run at a fixed temperature.  
The discussion paper of~\citet{Clifford1993Discussion} was apparently the first to demonstrate potential advantages of this approach through a numerical example. 
\citet{gramacy2010importance} extended this idea and proposed to combine samples generated at all temperature  levels in simulated tempering via importance sampling.   

Regarding our ergodicity results for importance-tempered MCMC, 
one closely related work is~\citet{livingstone2025foundations}, who built a general theory for continuous-time Markov chains arising from  locally-balanced MCMC methods. These methods can be viewed as  Markov chain importance sampling schemes, where the trial distributions are  implicitly constructed with the so-called locally-balanced proposals and thus different from the tempered target distributions considered in this work. A uniform ergodicity result was proved in~\citet{livingstone2025foundations} for discrete targets with super-exponential tails, but they did not consider heavy-tailed target distributions or investigate the necessity of the conditions. 

It is also interesting to compare importance-tempered MCMC with stereographic MCMC proposed by~\citet{yang2024stereographic}, who showed that stereographic projection sampler is uniformly ergodic on $\bbR^{d}$ if the tails of the target decay at least as fast as $\|x\|^{-2d}$. The key innovation of their algorithm is to use stereographic projection to transform $\bbR^{d}$ to a pre-compact space so that a random walk proposal can move between any two states within finitely many steps. Intuitively, stereographic MCMC achieves uniform ergodicity by ``compressing the space'', while importance-tempered MCMC does so by ``compressing the time''.

\section{Minimax optimal trial distributions}\label{sec:main} 

\subsection{Background, problem setup and preliminary results} \label{sec:def}  
 
Let $(\cX,  \cB(\cX), \mu)$ denote the underlying probability space, where $\cX$ is  Polish, $\cB(\cX)$ is the  Borel $\sigma$-algebra on $\cX$, and  $\mu$ is a $\sigma$-finite measure. We  assume that $\Pi$ is a probability measure with density function $\pi = \dd \Pi/ \dd \mu$ which is strictly positive everywhere. 
These assumptions on $(\cX, \cB(\cX), \mu)$ and $\Pi$ are standard and made only to simplify the presentation. In particular, they guarantee that every atom of $\Pi$ (or $\mu$) is equivalent to a singleton; that is, if $A \in  \cB(\cX)$ is an atom of $\Pi$, then there exists some $x \in A$ such that $\Pi(A) = \Pi(\{x\}) > 0$~\citep{kadets2018course}. We use $\cQ(\mu)$ to denote the set of all probability measures on $(\cX,  \cB(\cX))$ that are dominated by $\mu$, and the density of $Q \in \cQ(\mu)$ is   denoted by $q = \dd Q / \dd \mu$. 
Whenever the choice of $Q$ is clear from  context, we use $w = \pi / q$ to denote the importance weight; if $Q$ dominates $\Pi$, then $w = \dd \Pi / \dd Q$.   

Recall that we write $\Pi(f) = \int f \dd \Pi$ for each measurable and integrable $f \colon \cX \rightarrow \bbR$.  The importance sampling estimator for $\Pi(f)$ is given by 
\begin{equation}\label{eq:def-estimator}
    \widehat{\Pi}_{Q, n}(f)  \coloneqq \frac{1}{ n} \sum_{i=1}^n f(X_i) w (X_i),
\end{equation} 
where $X_1,  \dots, X_n$ are drawn from a distribution $Q \in \cQ(\mu)$ and $w = \pi / q$. Throughout this section, we assume $X_1, \dots, X_n$ are independent. 
We call $w(x)$  the importance weight of $x$ and  $Q$ the trial distribution (in the literature, $Q$ is also called the proposal distribution.) 
Provided that $\Pi$ is dominated by $Q$,  $\widehat{\Pi}_{Q, n}(f)$ is  unbiased with variance 
\begin{equation}
   n \mathrm{Var} \left(   \widehat{\Pi}_{Q, n}(f) \right)=  \Pi(f^2 w ) - \Pi(f)^2. 
\end{equation}
When $w$ can only be evaluated up to a normalizing constant, the self-normalized importance sampling estimator $\widetilde{\Pi}_{Q, n}(f) $ can be used,  which has been defined in equation~\eqref{eq:def-snis}.   
The estimator $\widetilde{\Pi}_{Q, n}(f) $ is asymptotically unbiased and, by the delta method, has asymptotic variance  
\begin{equation}\label{eq:var-snis}
 \sigma^2(Q, f) \coloneqq  \lim_{ n \rightarrow \infty} n \mathrm{Var} \left(   \widetilde{\Pi}_{Q, n}(f) \right) = \Pi \left(  [f - \Pi(f)]^2 w \right). 
\end{equation}
See, for example,~\citet{mcbook} for the derivation. Note that when $\Pi(f) = 0$, the two estimators $\widehat{\Pi}_{Q, n}(f) $ and $\widetilde{\Pi}_{Q, n}(f) $ have the same asymptotic variance.    
If $Q = \Pi$, then 
$$\sigma^2(\Pi, f) = \Pi(f^2) - \Pi(f) ^2$$ 
equals the variance of $f(X)$ with $X \sim \Pi$.

When   $Q$ is chosen properly, it is possible to achieve $\sigma^2(Q, f) < \sigma^2(\Pi, f)$, meaning that importance sampling outperforms i.i.d. sampling from $\Pi$.  
Indeed, it is well known that for each fixed $f$ such that $\Pi(|f|) > 0$, the variance of $\widehat{\Pi}_{Q, n}(f)$ is minimized when $q(x) \propto |f(x)| \pi(x)$; in particular, if $f > 0$ everywhere, we can construct $Q$ such that the estimator $\widehat{\Pi}_{Q, n}(f)$ has zero variance~\citep{kahn1953methods, rubinstein1981simulation}.  
For the self-normalized  estimator $\widetilde{\Pi}_{Q, n}(f)$, its asymptotic variance $\sigma^2(Q, f)$ is minimized when  $q(x) \propto |f(x) - \Pi(f)| \pi(x)$, which also implies that $ \sigma^2(Q, f) \geq \Pi( | f - \Pi(f) | )^2 $~\citep[Chap. 9.2]{mcbook}. 
When a large collection of functions might be of interest, it is generally considered advantageous to use a trial distribution whose shape is very similar to that of $\Pi$. 
As described in~\citet[Chap. 2.5.3]{liu2001monte}, a rule of thumb  is to measure the efficiency of $Q$ using $\Pi(w)^{-1}$, which is maximized when $\pi = q$.  

We propose to formalize the notion of an optimal trial distribution via a minimax perspective.  For each fixed  $f$ with $\sigma^2(\Pi, f) > 0$,  the ratio $\sigma^2(Q, f) /\sigma^2(\Pi, f)$  measures the asymptotic risk of the estimator $\widetilde{\Pi}_{Q, n}(f)$ relative to the variance of $f$. 
Letting $\cF$ denote a class of functions of interest, we say $Q^*$ is a minimax optimal trial distribution with respect to $\cF$ if  
\begin{equation}\label{eq:def-loss} 
      R(Q^*, \cF)  = \inf_{Q \in \cQ (\mu)} R(Q, \cF), 
      \text{ where }  R(Q, \cF) = \sup_{f \in \cF} \frac{ \sigma^2(Q, f) }{ \sigma^2(\Pi, f)  }. 
\end{equation} 
A natural choice for $\cF$ is the collection of all square-integrable functions with nonzero variance with respect to $\Pi$, which we denote by  
\begin{equation}\label{eq:def-L2} 
 \cL^2(\Pi)  = \left\{  f \colon  f\text{ is a measurable function}, \Pi(f^2)  < \infty,  \text{ and }  \sigma^2(\Pi, f) > 0 \right\}.
\end{equation} 
In this case, we simply say $Q^*$ is minimax optimal.  
By equation~\eqref{eq:var-snis}, $\sigma^2(Q, a f + c) =  a^2 \sigma^2(Q, f) $ for any $a, c \in \bbR$, from which it follows that 
\begin{equation}\label{eq:default-loss} 
   R(Q,  \cL^2(\Pi) )=  \sup_{f \colon \Pi(f) = 0, \Pi(f^2) = 1} \Pi  ( f^2 w  ). 
\end{equation}  
Below we provide two lemmas that compute  $R(Q,  \cL^2(\Pi) )$ for a given $Q$ under different settings. While not used directly in our upcoming proofs of minimax optimal trial distributions,  these lemmas already suggest that the presence of  atoms could change the analysis fundamentally.  

\begin{lemma}\label{lm:loss-L2}
Suppose that $\Pi$ has no atoms and $Q \in \cQ(\mu)$. Then, 
\begin{equation}
     R(Q,  \cL^2(\Pi) )  = \mathrm{ess\,sup}_x \, w(x)   \coloneqq \inf  \{ a \geq 0 \colon w \leq a, \, \Pi\text{-almost everywhere} \}. 
\end{equation} 
\end{lemma}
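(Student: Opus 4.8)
The plan is to work directly from the reduced expression~\eqref{eq:default-loss}, which identifies $R(Q, \cL^2(\Pi))$ with the supremum of $\Pi(f^2 w)$ over all $f$ satisfying $\Pi(f) = 0$ and $\Pi(f^2) = 1$, and then to establish the two matching inequalities. Write $M \coloneqq \mathrm{ess\,sup}_x w(x)$, where the essential supremum is taken with respect to $\Pi$ (equivalently with respect to $\mu$, since $\pi > 0$). The inequality $R(Q, \cL^2(\Pi)) \le M$ is immediate: for any admissible $f$, the bound $w \le M$ holding $\Pi$-almost everywhere gives $\Pi(f^2 w) \le M\,\Pi(f^2) = M$, and if $M = \infty$ there is nothing to check.

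For the reverse inequality I would fix an arbitrary $a < M$ and construct an admissible $f$ with $\Pi(f^2 w) > a$; letting $a \uparrow M$ then yields $R(Q, \cL^2(\Pi)) \ge M$. By definition of $M$, the set $A \coloneqq \{x : w(x) > a\}$ has $\Pi(A) > 0$, and this is exactly where the no-atoms hypothesis enters: $\Pi$ restricted to $A$ is a nonzero, non-atomic finite measure, so by the standard splitting property of non-atomic measures it admits a partition $A = A_1 \cup A_2$ with $\Pi(A_1) = \Pi(A_2) = \Pi(A)/2$. Taking $f \coloneqq \Pi(A)^{-1/2}(\ind_{A_1} - \ind_{A_2})$ gives $\Pi(f) = 0$ and $\Pi(f^2) = 1$, hence $f \in \cL^2(\Pi)$, while
\[
\Pi(f^2 w) \;=\; \frac{1}{\Pi(A)} \int_A w \, \dd \Pi \;>\; a,
\]
the strict inequality following from $w > a$ on $A$ together with $\Pi(A) > 0$. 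The same construction covers $M = \infty$ (take $a = N$ for each $N$ to see the supremum is infinite), and also the degenerate case in which $Q$ does not dominate $\Pi$: then $\mu(\{q = 0\}) > 0$, hence $\Pi(\{q = 0\}) > 0$ because $\pi > 0$, so $M = \infty$, and applying the construction with $A = \{q = 0\}$ gives $\Pi(f^2 w) = \infty$.

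Combining the two bounds gives $R(Q, \cL^2(\Pi)) = M$. I do not expect a genuine obstacle in carrying this out; the only ingredient that is not completely elementary is the partition of $A$ into two halves of equal mass, and this is precisely the step that fails once $\Pi$ has a large atom — then the high-weight region may be, up to a $\Pi$-null set, a single point, which cannot be split, so the mean-zero constraint can no longer be satisfied ``for free'' on that region. This is the structural reason the atomic case, treated in Theorem~\ref{th:discrete}, behaves so differently.
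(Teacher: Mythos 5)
Your proof is correct and follows essentially the same route as the paper: the trivial upper bound $\Pi(f^2 w) \le M\,\Pi(f^2)$, and for the lower bound the Sierpi\'nski splitting of a positive-$\Pi$-measure superlevel set of $w$ into two equal halves, with a test function of the form $\ind_{A_1} - \ind_{A_2}$. The only cosmetic differences are that you normalize $f$ explicitly, use $\{w > a\}$ rather than $\{w \ge M - \epsilon\}$, and spell out the case where $Q$ fails to dominate $\Pi$, which the paper covers implicitly through the $M = \infty$ branch.
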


\begin{proof}
    See Section~\ref{sec:proof1}. 
\end{proof}

\begin{lemma}\label{lm:worst-case} 
Let  $\cX = \{x_1, \dots, x_N\}$ be a finite set and $\mu$ be the counting measure. 
For $Q \in \cQ(\mu)$, let $w_{(k)}$ denote the $k$-th largest element in $w(x_1), \dots, w(x_N)$. 
 Then, 
    \begin{equation}
        R(Q,  \cL^2(\Pi)) = \left\{
        \begin{array}{cl}
           \infty,  & \quad \text{ if } w_{(1)} = \infty, \\ 
           w_{(1)},  &\quad \text{ if } w_{(1)} = w_{(2)} < \infty,  \\
          \lambda,   & \quad \text{ if } w_{(2)} < w_{(1)} < \infty,
        \end{array}
        \right. 
    \end{equation}
    where $\lambda$ is the unique solution on $(w_{(2)}, w_{(1)})$ to the equation 
    \begin{equation}\label{eq:lambda}
        \sum_{i=1}^N \frac{\pi (x_i) }{w (x_i) - \lambda} = 0. 
    \end{equation}
\end{lemma}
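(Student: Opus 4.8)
The plan is to reformulate the optimization in~\eqref{eq:default-loss} as a constrained quadratic‑form maximization on $\bbR^N$. Since $\pi>0$, the substitution $v_i = \sqrt{\pi(x_i)}\,f(x_i)$ is a bijection between functions on $\cX$ and vectors in $\bbR^N$; writing $u = (\sqrt{\pi(x_1)},\dots,\sqrt{\pi(x_N)})$, one has $\Pi(f) = \langle u,v\rangle$, $\Pi(f^2) = \|v\|^2$, $\|u\|^2 = \Pi(\cX) = 1$, and $\Pi(f^2 w) = \sum_i w(x_i)v_i^2 = v^\top D v$ with $D = \mathrm{diag}(w(x_1),\dots,w(x_N))$. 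Hence, with the conventions $\infty\cdot 0 = 0$ and $\infty\cdot t = \infty$ for $t>0$,
\begin{equation}
 R(Q,\cL^2(\Pi)) \;=\; \sup\bigl\{\, v^\top D v \;:\; \|v\| = 1,\ \langle u,v\rangle = 0 \,\bigr\}.
\end{equation}
(Here $N\geq 2$, as otherwise $\cL^2(\Pi)$ is empty.) When all $w(x_i)$ are finite the feasible set is compact and the supremum is attained.

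Next I would dispose of the first two cases. If $w_{(1)} = \infty$, then $q(x_i) = 0 < \pi(x_i)$ for some $i$; taking any unit $v$ supported on two coordinates including $i$ and orthogonal to $u$ (one linear equation in two unknowns, whose nonzero solution necessarily has $v_i\neq 0$) yields $v^\top D v = \infty$, so $R = \infty$. If $w_{(1)} = w_{(2)} < \infty$, the bound $v^\top D v \leq w_{(1)}\|v\|^2 = w_{(1)}$ is immediate, and it is matched by restricting to the coordinate subspace $S = \{i : w(x_i) = w_{(1)}\}$, which has dimension $\geq 2$, so $S\cap u^\perp$ contains a unit vector $v$, and for that $v$ we get $v^\top D v = w_{(1)}$.

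The substantive case is $w_{(2)} < w_{(1)} < \infty$, which I would treat through the secular equation~\eqref{eq:lambda}. Let $i^*$ be the unique index with $w(x_{i^*}) = w_{(1)}$, and for $\lambda\in(w_{(2)},w_{(1)})$ set $g(\lambda) = \sum_i \pi(x_i)/(w(x_i)-\lambda)$. One checks $g'(\lambda) = \sum_i \pi(x_i)/(w(x_i)-\lambda)^2 > 0$, $g(\lambda)\to -\infty$ as $\lambda\downarrow w_{(2)}$ (the term indexed by $w_{(2)}$ blows up), and $g(\lambda)\to +\infty$ as $\lambda\uparrow w_{(1)}$ (the term indexed by $i^*$ blows up), so $g$ has a unique root $\lambda$ on $(w_{(2)},w_{(1)})$. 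For the upper bound, fix $\lambda'\in(\lambda,w_{(1)})$ and a feasible $v$; use $\langle u,v\rangle = 0$ to write $u_{i^*}v_{i^*} = -\sum_{i\neq i^*} u_i v_i$ and apply Cauchy--Schwarz with the positive weights $\lambda'-w(x_i)$, $i\neq i^*$, to bound $v_{i^*}^2 \leq (\pi(x_{i^*}))^{-1}\bigl(\sum_{i\neq i^*}(\lambda'-w(x_i))v_i^2\bigr)\bigl(\sum_{i\neq i^*}\pi(x_i)/(\lambda'-w(x_i))\bigr)$. Substituting this into $v^\top(\lambda' I - D)v = (\lambda'-w_{(1)})v_{i^*}^2 + \sum_{i\neq i^*}(\lambda'-w(x_i))v_i^2$, the algebra collapses (pulling out the factor $\lambda'-w_{(1)}$) to show the whole quantity is a positive multiple of $g(\lambda')\sum_{i\neq i^*}(\lambda'-w(x_i))v_i^2$, which is strictly positive for $v\neq 0$ because $g(\lambda')>0$ and, by feasibility, not all $v_i$ with $i\neq i^*$ vanish. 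Hence $v^\top D v < \lambda'$ for every such $\lambda'$, giving $v^\top D v\leq\lambda$. For the reverse inequality, the vector $v^*$ with $v^*_i \propto u_i/(w(x_i)-\lambda)$ satisfies $\langle u,v^*\rangle \propto g(\lambda) = 0$, so it is feasible, and since $(w(x_i)-\lambda)v^*_i \propto u_i$ one gets $(v^*)^\top D v^* = \lambda\|v^*\|^2 + \langle u,v^*\rangle\cdot(\text{const}) = \lambda$. Therefore the supremum equals $\lambda$.

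The main obstacle is the upper bound in the third case: choosing the correct weighted Cauchy--Schwarz step and verifying that the resulting expression simplifies to (a positive multiple of) $g(\lambda')$, which is what ties the extremal value to the root of~\eqref{eq:lambda}. The rest is a sequence of direct estimates and verifications.
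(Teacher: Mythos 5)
Your proof is correct, and it takes a genuinely different route from the paper's. The paper introduces Lagrange multipliers for the constrained maximization of $\Pi(f^2 w)$, derives the first-order conditions $(w_i - \lambda)f_i = \eta/2$, splits into the cases $\lambda = w_i$ versus $\lambda \notin \{w_i\}$, and identifies $\lambda = \Pi(f^2 w)$ at any critical point so that the maximizer corresponds to the largest root of the secular equation~\eqref{eq:lambda}; the argument implicitly relies on compactness to ensure the supremum is attained at a KKT point and then compares the values at all critical points. Your version reformulates the problem as maximizing the quadratic form $v^\top D v$ over the unit sphere intersected with the hyperplane $u^\perp$, and then proves the upper bound by a direct weighted Cauchy--Schwarz inequality: for any $\lambda' \in (\lambda, w_{(1)})$, $v^\top(\lambda' I - D)v$ is bounded below by $\frac{(w_{(1)}-\lambda')g(\lambda')}{\pi(x_{i^*})}\sum_{i\neq i^*}(\lambda'-w(x_i))v_i^2 > 0$, so $v^\top D v < \lambda'$ for every feasible unit $v$, and letting $\lambda'\downarrow\lambda$ gives $v^\top D v \leq \lambda$; the matching lower bound comes from the explicit vector $v^*_i \propto \sqrt{\pi(x_i)}/(w(x_i)-\lambda)$, for which $(w(x_i)-\lambda)v^*_i \propto \sqrt{\pi(x_i)}$ makes $(v^*)^\top D v^* = \lambda$ a one-line computation. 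What your approach buys is a self-contained, purely estimate-based proof of optimality that never needs to enumerate or compare critical points; what the paper's Lagrangian derivation buys is a transparent mechanism for \emph{discovering} the secular equation and the identity $\lambda = \Pi(f^2 w)$, which is less apparent if one only sees the Cauchy--Schwarz miracle. Both arrive at the same secular equation and handle the degenerate cases $w_{(1)}=\infty$ and $w_{(1)}=w_{(2)}$ essentially identically.
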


\begin{proof} 
    See Section~\ref{sec:proof2}. 
\end{proof}

The next lemma will be used frequently   for finding lower bounds on $R(Q, \cL^2(\Pi))$.  

\begin{lemma}\label{lm:two}
Let $E \in  \cB(\cX)$ with $\Pi(E) = p > 0$. 
Define  $f$ by  
\begin{equation}
    f(x) =  \sqrt{ \frac{ 1 - p}{ p } }\ind_E(x)  - \sqrt{ \frac{ p }{ 1 - p} } \ind_{E^c}(x). 
\end{equation} 
Then, $\Pi(f) = 0, \Pi(f^2) = 1$, and for any $Q \in \cQ(\mu)$, 
\begin{equation}
    \sigma^2(Q, f) = \frac{ 1 - p}{ p } \,  \Pi(w \ind_E) +  \frac{ p }{ 1 - p} \,  \Pi(w \ind_{E^c})
    \geq 4 p ( 1 - p ).  
\end{equation}    
\end{lemma}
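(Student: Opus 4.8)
The plan is to verify the two claimed identities and then the inequality by straightforward computation, since the hard conceptual work has already been done in choosing the right $f$. First I would check the normalization: with $f = \sqrt{(1-p)/p}\,\ind_E - \sqrt{p/(1-p)}\,\ind_{E^c}$, we get $\Pi(f) = \sqrt{(1-p)/p}\cdot p - \sqrt{p/(1-p)}\cdot(1-p) = \sqrt{p(1-p)} - \sqrt{p(1-p)} = 0$, and $\Pi(f^2) = \frac{1-p}{p}\cdot p + \frac{p}{1-p}\cdot(1-p) = (1-p) + p = 1$. So $f$ is a legitimate test function for the supremum in~\eqref{eq:default-loss}.

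Next I would compute $\sigma^2(Q,f)$ using the formula~\eqref{eq:var-snis}. Since $\Pi(f) = 0$, we have $\sigma^2(Q,f) = \Pi(f^2 w)$, and $f^2 = \frac{1-p}{p}\ind_E + \frac{p}{1-p}\ind_{E^c}$ (the cross term vanishes because $E$ and $E^c$ are disjoint), giving $\sigma^2(Q,f) = \frac{1-p}{p}\,\Pi(w\ind_E) + \frac{p}{1-p}\,\Pi(w\ind_{E^c})$ as claimed.

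For the lower bound, the key observation is that $\Pi(w\ind_E) = \int_E (\pi/q)\,\dd\Pi = \int_E (\pi^2/q)\,\dd\mu \geq \left(\int_E \pi\,\dd\mu\right)^2 / \int_E q\,\dd\mu = \Pi(E)^2 / Q(E)$ by Cauchy--Schwarz (or Jensen), and similarly $\Pi(w\ind_{E^c}) \geq \Pi(E^c)^2 / Q(E^c) = (1-p)^2/(1-Q(E))$. Writing $t = Q(E) \in [0,1]$, we then have $\sigma^2(Q,f) \geq \frac{(1-p)}{p}\cdot\frac{p^2}{t} + \frac{p}{(1-p)}\cdot\frac{(1-p)^2}{1-t} = p(1-p)\left(\frac{1}{t} + \frac{1}{1-t}\right) = \frac{p(1-p)}{t(1-t)} \geq 4p(1-p)$, where the last step uses $t(1-t) \leq 1/4$. (If $Q(E) = 0$ or $Q(E^c) = 0$ then $w$ is not $\mu$-integrable over the relevant set and the bound holds trivially with the right side being $+\infty$, or one restricts attention to $Q$ dominating $\Pi$.)

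I do not anticipate a genuine obstacle here — every step is elementary — but the one point requiring a little care is the Cauchy--Schwarz application when $Q$ does not dominate $\Pi$, i.e.\ when $w$ may take the value $+\infty$ on a set of positive $\Pi$-measure; in that case $\Pi(w\ind_E)$ or $\Pi(w\ind_{E^c})$ is $+\infty$ and the asserted inequality is immediate, so the bound is really only substantive when $Q(E), Q(E^c) > 0$, and the Cauchy--Schwarz estimate is valid there.
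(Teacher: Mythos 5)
Your proof is correct and follows essentially the same route as the paper's: verify the normalization of $f$, expand $\sigma^2(Q,f)=\Pi(f^2 w)$, apply Cauchy--Schwarz to get $\Pi(w\ind_E)\geq p^2/Q(E)$ and its analogue on $E^c$, and finish with $t(1-t)\leq 1/4$. Your extra remark about the degenerate cases $Q(E)=0$ or $Q(E^c)=0$ is a sensible small addition, but the argument is otherwise identical.
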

\begin{proof}
See Section~\ref{sec:proof3}
\end{proof}

\subsection{Explicit construction of minimax optimal trial distributions}\label{sec:minimax}

Consider the  optimization problem with objective function $R(Q,  \cL^2(\Pi) )$ over all $Q \in \cQ(\mu)$. 
For each fixed   $f$,  $\Pi(f^2 w) = \int f^2 \pi^2 q^{-1} \dd \mu$ is  convex in $q$. 
Since the supremum of convex functions is again convex, by equation~\eqref{eq:default-loss}, finding the minimax optimal trial distribution is a convex optimization problem, and thus the minimizer always exists. In this subsection, we solve this optimization problem and find the minimax optimal $Q$. 

Lemma~\ref{lm:loss-L2} immediately implies that the minimax optimal trial distribution is $\Pi$ itself when there is no atom (which holds for all absolutely continuous probability distributions on $\bbR^d$). 
For a discrete distribution $\Pi$ defined on $\cX = \{x_1, \dots, x_N\}$, by Lemma~\ref{lm:worst-case}, the optimal  $Q$ should satisfy 
\begin{equation}
    w_{(1)} \geq 1 \geq w_{(2)} \geq \cdots \geq w_{(N)}. 
\end{equation} 
To see this, assume $w_{(2)} > 1$. Construct a test function $f$ with $\Pi(f) = 0, \Pi(f^2) = 1$ which is supported on the two states with largest importance weights. Then $f$ must satisfy $\sigma^2(Q, f) = \Pi(f^2 w)   > 1$, which implies that $R(Q, \cL^2(\Pi)) > 1 = R(\Pi, \cL^2(\Pi))$. Hence, $Q$ cannot outperform $\Pi$ if $w_{(2)} > 1$. 
It is  natural to conjecture that the optimal $Q$  satisfies $w_{(2)} = \cdots = w_{(N)}$. This conjecture actually can be proved, and with some algebraic manipulations one can use the expression for $R(Q, \cL^2(\Pi))$ given in Lemma~\ref{lm:worst-case} to determine the optimal  $Q$ on finite spaces. In particular,   $\Pi$ is not optimal if $\max_{i} \pi(x_i) > 1/2$. 
We  omit these derivations, since once the optimal form of $Q$ is identified, we can  prove the results in full generality with simpler arguments.

First, we prove that  $\Pi$ is  minimax optimal as long as there is no atom with probability mass greater than one half. 

\begin{theorem}\label{th:opt-Q-no-atom}
Let $\cA$ denote the set of atoms of $\Pi$. 
If $\cA = \emptyset$ or $\sup_{A \in \cA}  \Pi(A) \leq 1/2$, 
\begin{equation}
     \inf_{Q \in \cQ(\mu)} R(Q,  \cL^2(\Pi) )  =  R(\Pi,  \cL^2(\Pi) ) = 1. 
\end{equation} 
\end{theorem}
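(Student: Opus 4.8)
The plan is to prove the two displayed equalities separately. The upper bound $\inf_{Q \in \cQ(\mu)} R(Q,\cL^2(\Pi)) \le R(\Pi,\cL^2(\Pi)) = 1$ is immediate from the definition in~\eqref{eq:def-loss}, since $\sigma^2(\Pi,f)/\sigma^2(\Pi,f) = 1$ for every $f \in \cL^2(\Pi)$. So the content is the reverse inequality $R(Q,\cL^2(\Pi)) \ge 1$ for every $Q \in \cQ(\mu)$. First I would dispose of the case $Q \not\gg \Pi$: then there is $E$ with $\Pi(E) > 0$ and $Q(E) = 0$ (and $\Pi(E) < 1$, since $Q$ is a probability measure and $\pi > 0$ everywhere), so $\Pi(w\ind_E) = \int_E \pi^2 q^{-1} \dd\mu = \infty$, and feeding this $E$ into Lemma~\ref{lm:two} gives $\sigma^2(Q,f) = \infty$, hence $R(Q,\cL^2(\Pi)) = \infty$. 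So assume $Q \gg \Pi$, write $w = \dd\Pi/\dd Q$ and $S \coloneqq \Pi(w) \in [1,\infty]$; here $S \ge 1$ is Cauchy--Schwarz, $1 = (\int \pi\,\dd\mu)^2 = (\int \sqrt{q}\,(\pi/\sqrt{q})\,\dd\mu)^2 \le (\int q\,\dd\mu)(\int \pi^2 q^{-1}\,\dd\mu) = S$.

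The heart of the argument is to exhibit a single event $E$ whose test function $f_E$ from Lemma~\ref{lm:two} already certifies $R(Q,\cL^2(\Pi)) \ge 1$. Writing $p = \Pi(E)$ and $\alpha = \Pi((w-1)\ind_E)$, and using $\alpha + \Pi((w-1)\ind_{E^c}) = S - 1$, a direct rearrangement of the formula in Lemma~\ref{lm:two} yields
\begin{equation}
\sigma^2(Q,f_E) = 1 + \frac{p}{1-p}\,(S-1) + \frac{1-2p}{p(1-p)}\,\alpha
\end{equation}
(interpreted as $+\infty$ if either weighted integral diverges, in which case there is nothing left to prove). Thus $\sigma^2(Q,f_E) \ge 1$ as soon as $p \le 1/2$, so that $1 - 2p \ge 0$, and $\alpha \ge 0$; and $\alpha \ge 0$ holds automatically whenever $E \subseteq \{w \ge 1\}$, since then $w - 1 \ge 0$ on $E$. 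Hence it suffices to produce a measurable $E \subseteq \{w \ge 1\}$ with $0 < \Pi(E) \le 1/2$.

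It is exactly here that the hypothesis on atom sizes is used, and I expect this to be the main (though elementary) obstacle, because the naive choice "$\Pi(E) = 1/2$" need not be attainable once $\Pi$ has atoms. Note first $\Pi(\{w \ge 1\}) > 0$, for otherwise $w < 1$ $\Pi$-a.e., forcing $S < 1$. If $\Pi(\{w \ge 1\}) \le 1/2$, take $E = \{w \ge 1\}$. Otherwise $\Pi(\{w \ge 1\}) > 1/2$, and I split into two cases. If $\{w \ge 1\}$ contains (a set $\Pi$-equivalent to) an atom $A$ of $\Pi$, then by hypothesis $0 < \Pi(A) \le 1/2$, so $E = A$ works. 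If $\{w \ge 1\}$ contains no atom of $\Pi$, then $\Pi$ restricted to $\{w \ge 1\}$ is atomless, and by the classical intermediate-value property of atomless measures — applicable since $\Pi(\{w \ge 1\}) > 1/2$ — there is $E \subseteq \{w \ge 1\}$ with $\Pi(E) = 1/2$. In every case the desired $E$ exists; the construction fails precisely when some atom has mass exceeding $1/2$, which is the regime of Theorem~\ref{th:discrete}. Combining $R(Q,\cL^2(\Pi)) \ge 1$ for all $Q$ with the trivial upper bound gives both equalities, with the infimum attained at $Q = \Pi$.
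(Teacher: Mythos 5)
Your proof is correct, and it takes a genuinely different (and in places cleaner) route than the paper's. Both arguments rest on Lemma~\ref{lm:two}, Sierpi\'{n}ski's intermediate-value theorem for atomless measures, and the Cauchy--Schwarz bound $\Pi(w) \geq 1$. The paper proceeds by showing $R(Q,\cL^2(\Pi)) \geq \mathrm{ess\,sup}_{x\in\cX_0} w(x)$ on the non-atomic part and, when that falls below $1$, picking an atom $x^*$ attaining $\mathrm{ess\,sup}_x w(x)$ and rearranging the Lemma~\ref{lm:two} bound into $(1-2p)\bigl(w(x^*) - \Pi|_{E^c}(w)\bigr) + \Pi(w)$, each piece of which is nonnegative. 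Your argument instead isolates the identity
\begin{equation}
\sigma^2(Q, f_E) = 1 + \frac{p}{1-p}\bigl(\Pi(w)-1\bigr) + \frac{1-2p}{p(1-p)}\,\Pi\bigl((w-1)\ind_E\bigr),
\end{equation}
which makes the two sufficient conditions ($p \leq 1/2$ and $w \geq 1$ on $E$) completely transparent, and then reduces the theorem to the purely measure-theoretic problem of producing $E \subseteq \{w \geq 1\}$ with $0 < \Pi(E) \leq 1/2$. This reformulation buys you something: the paper's choice of $x^*$ as an atom achieving $\mathrm{ess\,sup}_x w(x)$ need not exist when there are countably many atoms whose weights approach but never attain the supremum, whereas your Case~A only requires that \emph{some} atom sits inside $\{w \geq 1\}$, which holds by definition of the case. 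Your trichotomy (the whole set $\{w \geq 1\}$ is small enough; it contains an atom of mass at most $1/2$; it is atomless and admits a Sierpi\'{n}ski subset of mass $1/2$) is exhaustive precisely under the stated hypothesis, and your formula also makes the failure mode of Theorem~\ref{th:discrete} (an atom of mass $>1/2$ forces $p > 1/2$) visible at a glance. The preliminary reduction disposing of $Q \not\gg \Pi$ and the observation $\Pi(\{w \geq 1\}) > 0$ are both handled correctly.
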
 

\begin{proof}
Since  $R(\Pi,  \cL^2(\Pi) ) = 1$ holds trivially, we only need to show that  $R(Q,  \cL^2(\Pi) ) \geq 1$ for any $Q \in \cQ(\mu)$. 
If $\cA = \emptyset$,  there exists $E \in \cB(\cX)$ with $\Pi(E) = 1/2$~\citep{sierpinski1922fonctions}. It immediately follows from Lemma~\ref{lm:two} that $R(Q, \cL^2(\Pi)) \geq 1$.

Consider the case $\cA \neq \emptyset$. Since each atom is equivalent to a singleton under our assumption on $(\cX, \cB(\cX), \mu)$,  without loss of generality, we can represent the set of atoms by $\cA^* = \{a_1, a_2, \dots \} \subset \cX$, where $\Pi( \{a\}) > 0$ for each $a \in \cA^*$.  
Let $\cX_0 = \cX \setminus \cA^*$ denote the non-atomic part of $\cX$. Repeating the argument used to prove Lemma~\ref{lm:loss-L2}, we obtain that 
\begin{equation}
     R(Q,  \cL^2(\Pi) )  \geq \mathrm{ess\,sup}_{x \in \cX_0 } \, w(x)    = \inf  \{ a \geq 0 \colon w \leq a \text{ on $\cX_0 $, $\Pi$-almost everywhere} \}. 
\end{equation}  
Hence, if $\mathrm{ess\,sup}_{x \in \cX_0 } \, w(x)  \geq 1$, the proof is complete. Assuming $\mathrm{ess\,sup}_{x \in \cX_0 } \, w(x) < 1$, we use $x^* \in \cA^*$ to denote the atom such that 
$$w(x^*) = \mathrm{ess\,sup}_{x \in \cX} \, w(x) \geq 1.$$ 
Such $x^*$ always exists since each atom has a strictly positive probability. 
If $w(x^*) = \infty$,   $\sigma^2(Q, f) = \infty$ whenever $f(x^*) \neq 0$. So we assume $w(x^*) < \infty$. 
Apply Lemma~\ref{lm:two} with $E = \{x^*\}$ and $p = \Pi(\{x^*\})$ to get  
\begin{equation}\label{eq:tmp1}
      R(Q, \cL^2(\Pi)) \geq   (1 - p) w(x^*)   + p \, \Pi|_{ E^c}(w), \text{ where } \Pi|_{E^c}(w) =  \frac{ \Pi(w \ind_{E^c }) }{ 1 - p}.  
\end{equation}
Since $\Pi|_{E^c}(w)$ is a weighted average of $w$ on $E^c$, we have $\Pi|_{E^c}(w) \leq w(x^*)$.  
Rewriting the right-hand side of~\eqref{eq:tmp1} and using the assumption $p \leq 1/2$, we get 
\begin{align*}
       R(Q, \cL^2(\Pi)) &\geq  (1 - 2 p ) ( w(x^*) -  \Pi|_{E^c}(w)) 
       + p \, w(x^*) + (1 -  p )  \Pi|_{E^c}(w) \\
       & \geq p \, w(x^*) + (1 - p )  \Pi|_{E^c}(w) \\
       & = \Pi(w). 
\end{align*} 
By the Cauchy--Schwarz inequality, $  \Pi(\cX) \Pi(w) \geq Q(\cX)^2$, which implies that $\Pi(w) \geq 1$  and thus concludes the proof. 
\end{proof}

Next, we prove that  the assumption of Theorem~\ref{th:opt-Q-no-atom} is also necessary for the minimax optimality of $\Pi$. Once there is an atom with probability mass strictly greater than $1/2$, there exists  $Q \in \cQ(\mu)$ that outperforms $\Pi$. The minimax optimal trial distribution in this case can also be determined in closed form. 

\begin{theorem}\label{th:discrete}  
Suppose  $\Pi(\{x^*\}) = p > 1/2$ for some $x^* \in \cX$. Then, 
\begin{equation}\label{eq:minimax-large-atom}
    R(Q^*, \cL^2(\Pi))  = \inf_{Q \in \cQ (\mu)} R(Q, \cL^2(\Pi)) = 4 \, p (1 - p) < 1, 
\end{equation}
where the minimax optimal trial distribution $Q^*$ and its density $q^*$ satisfy 
    \begin{equation}\label{eq:opt-proposal}
        Q^*(\{x^*\}) = \frac{1}{2}, \text{ and } 
        q^*(x) =  \frac{ \pi(x) }{2 (1 - p)} \text{ for } x \neq x^*. 
    \end{equation} 
\end{theorem}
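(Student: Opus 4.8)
The plan is to prove the two assertions separately: (i) the lower bound $R(Q,\cL^2(\Pi)) \geq 4p(1-p)$ for every $Q \in \cQ(\mu)$, and (ii) that the specific $Q^*$ in~\eqref{eq:opt-proposal} attains it, so that in particular the infimum equals $4p(1-p)$ and is achieved. The lower bound in (i) is essentially immediate from Lemma~\ref{lm:two}: applying that lemma with $E = \{x^*\}$ gives a test function $f$ with $\Pi(f) = 0$, $\Pi(f^2) = 1$, and $\sigma^2(Q, f) \geq 4p(1-p)$, hence $R(Q, \cL^2(\Pi)) \geq 4p(1-p)$. Since $p > 1/2$, we have $4p(1-p) < 1$, so this bound is already strictly better than what $\Pi$ achieves.

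The substantive part is (ii): showing $R(Q^*, \cL^2(\Pi)) \leq 4p(1-p)$. First I would record the importance weights under $Q^*$: since $Q^*(\{x^*\}) = 1/2$ we get $w^*(x^*) = \pi(x^*)/q^*(x^*) = p/(1/2) = 2p$, and for $x \neq x^*$ we get $w^*(x) = \pi(x) / (\pi(x)/(2(1-p))) = 2(1-p)$. So $w^*$ takes only two values, $2p$ on the atom and $2(1-p)$ off it. Now using the formula~\eqref{eq:var-snis}, for any $f$ with $\Pi(f) = 0$ and $\Pi(f^2) = 1$,
\begin{equation}
\sigma^2(Q^*, f) = \Pi(f^2 w^*) = 2p\, f(x^*)^2 \pi(x^*) + 2(1-p)\,\Pi(f^2 \ind_{\{x^*\}^c}) = 2p^2 f(x^*)^2 + 2(1-p)\bigl(1 - p f(x^*)^2\bigr),
\end{equation}
where I used $\Pi(f^2 \ind_{\{x^*\}^c}) = \Pi(f^2) - f(x^*)^2 \pi(x^*) = 1 - p f(x^*)^2$. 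This simplifies to $\sigma^2(Q^*, f) = 2(1-p) + 2p(2p-1) f(x^*)^2$. Since $p > 1/2$, the coefficient $2p(2p-1)$ is positive, so to maximize over admissible $f$ I need the largest possible value of $f(x^*)^2$ subject to $\Pi(f) = 0$, $\Pi(f^2) = 1$.

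The remaining step is this constrained maximization of $f(x^*)^2$. The constraint $\Pi(f) = 0$ forces $p f(x^*) = -\Pi(f \ind_{\{x^*\}^c})$, and by Cauchy--Schwarz $\Pi(f\ind_{\{x^*\}^c})^2 \leq \Pi(\ind_{\{x^*\}^c}) \Pi(f^2 \ind_{\{x^*\}^c}) = (1-p)(1 - p f(x^*)^2)$. Hence $p^2 f(x^*)^2 \leq (1-p)(1 - p f(x^*)^2)$, which rearranges to $f(x^*)^2 \leq (1-p)/(p(1 - p + p)) \cdot$ --- more precisely $p^2 f(x^*)^2 + p(1-p) f(x^*)^2 \leq 1-p$, i.e. $p f(x^*)^2 \leq 1-p$, so $f(x^*)^2 \leq (1-p)/p$. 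Substituting this bound gives $\sigma^2(Q^*, f) \leq 2(1-p) + 2p(2p-1)\cdot(1-p)/p = 2(1-p)\bigl(1 + 2p - 1\bigr) = 4p(1-p)$, exactly as claimed. Taking the supremum over $f$ yields $R(Q^*, \cL^2(\Pi)) \leq 4p(1-p)$, and combined with (i) this forces equality everywhere in~\eqref{eq:minimax-large-atom}. The main obstacle — if any — is just being careful that the Cauchy--Schwarz step is tight (it is attained by an $f$ constant on $\{x^*\}^c$, which is exactly the test function from Lemma~\ref{lm:two}), confirming the bound is not merely an upper estimate but the true value; everything else is bookkeeping.
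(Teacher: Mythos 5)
Your proof is correct and follows essentially the same route as the paper's: the lower bound via Lemma~\ref{lm:two} with $E = \{x^*\}$, the computation of $w^*$, the Cauchy--Schwarz bound $f(x^*)^2 \leq (1-p)/p$, and the substitution into $\Pi(f^2 w^*) = 2(1-p) + 2p(2p-1)f(x^*)^2$. The only cosmetic difference is that you explicitly note the coefficient $2p(2p-1)$ is positive to justify maximizing $f(x^*)^2$, whereas the paper simply substitutes the bound; the content is identical.
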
 
\begin{proof} 
Let  $E = \{x^*\}$. 
By Lemma~\ref{lm:two}, $R(Q, \cL^2(\Pi)) \geq 4 p (1 - p)$ for any  $Q \in \cQ(\mu)$. 
Hence, we only need to prove that $ R(Q^*, \cL^2(\Pi))  \leq   4 p (1 - p) $, which means that for any $f$ with $\Pi(f) = 0$ and $\Pi(f^2) = 1$,  we need to show $\Pi(f^2 w^*) \leq  4 p (1 - p)$ where  $w^* = \pi / q^*$.   

Fix such a function $f$ and let $f_* = f(x^*)$.  
Using 
\begin{align}
    \Pi(f) &= \Pi(f\ind_E) + \Pi(f\ind_{E^c}) = p f_* + \Pi(f\ind_{E^c}) = 0, \\ 
    \Pi(f^2) &=\Pi(f^2\ind_E) + \Pi(f^2\ind_{E^c}) = p f_*^2  + \Pi(f^2\ind_{E^c}) = 1,  \label{eq:cond-f2}
\end{align}
and the Cauchy--Schwarz inequality, we find that  
\begin{equation}\label{eq:fmax-bound}
(1 - p) (1 - p f_*^2 ) = \Pi( E^c )   \Pi(f^2   \ind_{E^c} ) \geq \Pi(f    \ind_{E^c} )^2 
=  p^2 f_*^2.  
\end{equation}
Rearranging the terms yields  
\begin{equation}\label{eq:tmp-fstar}
    f_*^2 \leq \frac{1-p}{p}. 
\end{equation}
By the construction of $Q^*$,   $w^*(x^*) = 2 p $ and $w^*(x) = 2 (1 - p)$ for any $x \neq x^*$. Hence,  
\begin{align}
    \Pi(f^2 w^*) &= \Pi(f^2 w^* \ind_E ) +   \Pi(f^2 w^* \ind_{E^c} ) \\  
    &= 2 p^2  f_*^2 + 2(1 - p)  \Pi(f^2   \ind_{E^c} ), \\ 
    &= 2 p^2  f_*^2 + 2(1 - p)  (1 - p f_*^2),  
\end{align}
where the last step follows from~\eqref{eq:cond-f2}.  Using~\eqref{eq:tmp-fstar} we find that 
\begin{equation}
    \Pi(f^2 w^*) = 2p(2p - 1) f_*^2 + 2 (1 - p) \leq 4 p  (1- p). 
\end{equation}
Since $f$ is arbitrary, this shows that $R(Q^*, \cL^2(\Pi)) \leq 4 p (1 - p)$, which concludes the proof.   Note that the equality $\Pi(f^2 w^*) = 4 p  (1- p) $ is attained when $f$ is a constant on $E^c$. 
\end{proof}

The optimal trial distribution $Q^*$ in Theorem~\ref{th:discrete} has the same shape as $\Pi$ on $\cX \setminus \{x^*\}$, but it assigns strictly smaller probability to $x^*$ than $\Pi$. This motivates us to consider the trial distribution of the following form: 
\begin{equation}\label{eq:def-Qc}
    Q(\{x^*\}) = c, \text{ and } 
        q^*(x) =  \frac{ (1 - c) \pi(x) }{ 1 - \Pi(\{x^*\})  } \text{ for } x \neq x^*. 
\end{equation}
It turns out that as long as $ 1 - \Pi(\{x^*\}) < c <  \Pi(\{x^*\})$, such $Q$ is more efficient than $\Pi$ itself. 

\begin{corollary}\label{coro:c}
Suppose  $\Pi(\{x^*\}) = p > 1/2$ for some $x^* \in \cX$. For $Q$ given by~\eqref{eq:def-Qc},
\begin{equation}
     R(Q, \cL^2(\Pi)) = \frac{p(1 - p)}{c (1 - c)}. 
\end{equation}
Hence, $R(Q, \cL^2(\Pi)) < 1$ if and only if $c \in (1-p, p)$. 
\end{corollary}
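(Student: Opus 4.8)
The plan is to treat \eqref{eq:def-Qc} as a one-parameter family containing the optimal trial of Theorem~\ref{th:discrete} (recovered at $c=1/2$) and to re-run the proof of that theorem with $c$ kept general. The first observation is that, writing $E\coloneqq\{x^*\}$, the weight $w=\pi/q$ takes only two values: $w(x^*)=p/c$, and $w(x)=(1-p)/(1-c)$ for every $x\neq x^*$. Consequently any quantity of the form $\Pi(g\,w)$, with $g$ supported on $E$ or on $E^c$, collapses to a constant multiple of a plain $\Pi$-integral of $g$, which is what makes the whole computation explicit.

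For the lower bound, apply Lemma~\ref{lm:two} with $E=\{x^*\}$ (so the ``$p$'' there is $\Pi(\{x^*\})=p$). The test function it produces satisfies $\Pi(f)=0$, $\Pi(f^2)=1$, and $\sigma^2(Q,f)=\tfrac{1-p}{p}\,\Pi(w\ind_E)+\tfrac{p}{1-p}\,\Pi(w\ind_{E^c})$; since $\Pi(w\ind_E)=p^2/c$ and $\Pi(w\ind_{E^c})=(1-p)^2/(1-c)$, this equals $p(1-p)\left(\tfrac1c+\tfrac1{1-c}\right)=\tfrac{p(1-p)}{c(1-c)}$. Hence $R(Q,\cL^2(\Pi))\geq \tfrac{p(1-p)}{c(1-c)}$ for every $c\in(0,1)$.

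For the matching upper bound, fix any $f$ with $\Pi(f)=0$ and $\Pi(f^2)=1$ and set $f_*=f(x^*)$. Exactly as in the proof of Theorem~\ref{th:discrete}, the two constraints give $\Pi(f\ind_{E^c})=-pf_*$ and $\Pi(f^2\ind_{E^c})=1-pf_*^2$, and Cauchy--Schwarz in the form $\Pi(E^c)\,\Pi(f^2\ind_{E^c})\geq\Pi(f\ind_{E^c})^2$ yields $f_*^2\leq (1-p)/p$. Using the two-valued structure of $w$, $\Pi(f^2w)=\tfrac{p^2}{c}f_*^2+\tfrac{1-p}{1-c}(1-pf_*^2)$, which is an \emph{affine} function of the single scalar $s\coloneqq f_*^2\in[0,(1-p)/p]$, with slope $\tfrac{p(p-c)}{c(1-c)}$. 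When $c\leq p$ the slope is nonnegative, so the supremum over $s$ is attained at $s=(1-p)/p$, where a one-line computation returns $\tfrac{p(1-p)}{c(1-c)}$; combined with the lower bound this proves $R(Q,\cL^2(\Pi))=\tfrac{p(1-p)}{c(1-c)}$, with equality achieved by any $f$ that is constant on $E^c$ with $f_*^2=(1-p)/p$. The equivalence then follows from the shape of the parabola $t\mapsto t(1-t)$: it is strictly concave with maximum at $1/2$ and meets the level $p(1-p)$ exactly at $t\in\{1-p,p\}$, so $c(1-c)>p(1-p)$ iff $c\in(1-p,p)$, whence $R(Q,\cL^2(\Pi))<1\iff c\in(1-p,p)$ in this regime; for $c>p>1/2$ one has $c(1-c)<p(1-p)$, so the lower bound already forces $R(Q,\cL^2(\Pi))\geq\tfrac{p(1-p)}{c(1-c)}>1$ and the equivalence holds vacuously.

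I do not expect a genuine obstacle here, since the argument is a parametrized replay of Theorem~\ref{th:discrete} plus Lemma~\ref{lm:two}. The one point that needs care is the sign of the slope $p-c$ in the affine objective: for $c\leq p$ (the only regime in which downweighting $x^*$ makes sense, and which includes the full efficiency-improving range $(1-p,p)$) the Cauchy--Schwarz bound on $f_*^2$ is active and yields the clean value, whereas for $c>p$ the maximizing $f$ drives $f_*$ toward $0$ and only the lower bound $R>1$ is needed — so this edge case does not affect the stated conclusion.
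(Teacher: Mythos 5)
Your proof is correct and follows essentially the same route as the paper (which simply re-runs the argument of Theorem~\ref{th:discrete} with general $c$). You have, however, caught a small imprecision that the paper glosses over: the explicit analysis of the affine objective $\Pi(f^2 w)=\tfrac{p(p-c)}{c(1-c)}f_*^2+\tfrac{1-p}{1-c}$ shows the claimed identity $R(Q,\cL^2(\Pi))=\tfrac{p(1-p)}{c(1-c)}$ only holds when the slope $p-c$ is nonnegative (i.e.\ $c\le p$); for $c>p$ the supremum is attained at $f_*=0$ and yields the strictly larger value $\tfrac{1-p}{1-c}$, provided $E^c$ supports a non-degenerate test function. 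As you note, this affects neither the lower bound from Lemma~\ref{lm:two} (valid for all $c$) nor the ``if and only if'' conclusion, since for $c>p>1/2$ both $\tfrac{1-p}{1-c}$ and the lower bound exceed~$1$. The corollary as stated would be made airtight by adding the hypothesis $c\le p$, which already covers the entire efficiency-improving range $(1-p,p)$ and the stated regime $1-\Pi(\{x^*\})<c<\Pi(\{x^*\})$ preceding the corollary.
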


\begin{proof}
 This follows by the same argument used to prove Theorem~\ref{th:discrete}. 
\end{proof}

\subsection{Extension to multiple importance sampling} \label{sec:multi}

The proof techniques in the previous subsection can be extended to the case where multiple trial distributions are used. Consider $n = n_X + n_Y$ independent samples with $X_1, \dots, X_{n_X}$ drawn from  $Q_X$ and $Y_1, \dots, Y_{n_Y}$ from $Q_Y$. When the normalizing constants of $Q_X, Q_Y$ are not known, we can compute the self-normalized importance sampling estimators $\widetilde{\Pi}_{Q_X, n_X}(f)$ and $\widetilde{\Pi}_{Q_Y, n_Y}(f)$ separately by 
\begin{equation}\label{eq:def-snis-2}
    \widetilde{\Pi}_{Q_X, n_X}(f) \coloneqq \frac{\sum_{i=1}^{n_X} f(X_i) w_X(X_i) }{ \sum_{i=1}^{n_X} w_X(X_i) }, \quad 
     \widetilde{\Pi}_{Q_Y, n_Y}(f) \coloneqq \frac{\sum_{i=1}^{n_Y} f(Y_i) w_Y(Y_i) }{ \sum_{i=1}^{n_Y} w_Y(Y_i) }, 
\end{equation}
where $w_X = \pi / q_Y$ and $w_Y = \pi / q_Y$. Since both are asymptotically unbiased, we can choose any $t \in (0, 1)$ and combine the two estimators by 
\begin{equation}\label{eq:def-multi}
    \widetilde{\Pi}_{Q_X, Q_Y}(f) \coloneqq t   \widetilde{\Pi}_{Q_X, n_X}(f) + (1 - t)\widetilde{\Pi}_{Q_Y, n_Y}(f). 
\end{equation} 
This is a special case of the general multiple importance sampling methodology studied in~\citet{veach1995optimally} and~\citet{he2014optimal}, where one can introduce another function $\rho$ and compute the estimator by $\widetilde{\Pi}_{Q_X, n_X}(f \rho ) + \widetilde{\Pi}_{Q_Y, n_Y} (f (1- \rho))$. Tuning $\rho$ requires knowledge about $f$ or $\Pi$, and here we only consider the estimator~\eqref{eq:def-multi} which  essentially assumes that $\rho$ is constant. 
Assuming $n_X/(n_X + n_Y) \rightarrow \delta \in (0, 1)$, the asymptotic variance of $\widetilde{\Pi}_{Q_X, Q_Y}(f)$ is given by 
\begin{equation}\label{eq:def-V-multi}
  V(f) \coloneqq  \lim_{n_X, n_Y \rightarrow \infty} (n_X + n_Y) \mathrm{Var}\left( \widetilde{\Pi}_{Q_X, Q_Y}(f) \right)  = \frac{t^2}{\delta}  \sigma^2(Q_X, f) + \frac{ (1-t)^2}{1 - \delta} \sigma^2 (Q_Y, f).
\end{equation}

Our next theorem shows that from the minimax perspective, such a multiple importance sampling procedure does not provide improvement over using a single minimax optimal trial distribution; that is, for any choice of $Q_X, Q_Y, t, \delta$, 
\begin{equation}
    \sup_{f \colon \Pi(f) = 0, \Pi(f^2) = 1} V(f) \geq  \inf_{Q \in \cQ (\mu)} R(Q, \cL^2(\Pi)). 
\end{equation}
This is not entirely obvious from~\eqref{eq:def-V-multi}: though we can bound the worst-case value of $\sigma^2(Q_X, f) $ and $\sigma^2(Q_Y, f)$ separately using our previous results, there may not exist a single function $f$  attaining both bounds simultaneously. 

\begin{proposition}\label{th:multi}
Let $V$ be given by~\eqref{eq:def-V-multi}. 
If $\Pi(\{x^*\}) = p > 1/2$ for some $x^* \in \cX$, then 
\begin{equation}
    \sup_{f \colon \Pi(f) = 0, \Pi(f^2) = 1} V(f) \geq 4\, p (1 - p). 
\end{equation}
If $\Pi$ has no atom with probability mass strictly greater than $1/2$, then 
\begin{equation}
    \sup_{f \colon \Pi(f) = 0, \Pi(f^2) = 1} V(f) \geq 1. 
\end{equation}  
\end{proposition}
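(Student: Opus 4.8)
The plan is to show that the combined estimator~\eqref{eq:def-multi} is, in the minimax sense, never more efficient than a single suitably chosen trial distribution, so that the two bounds follow directly from Theorem~\ref{th:opt-Q-no-atom} and Theorem~\ref{th:discrete}. Fix $f$ with $\Pi(f) = 0$ and $\Pi(f^2) = 1$. Since $\Pi(f) = 0$, equation~\eqref{eq:var-snis} gives $\sigma^2(Q_X, f) = \Pi(f^2 w_X)$ and $\sigma^2(Q_Y, f) = \Pi(f^2 w_Y)$, so from~\eqref{eq:def-V-multi},
\begin{equation*}
V(f) = \Pi( f^2 \bar w ), \qquad \bar w \coloneqq \frac{t^2}{\delta}\, w_X + \frac{(1-t)^2}{1-\delta}\, w_Y = \pi\Big( \frac{\alpha}{q_X} + \frac{\beta}{q_Y} \Big),
\end{equation*}
where $\alpha = t^2/\delta$ and $\beta = (1-t)^2/(1-\delta)$.

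The key step is to exhibit a single $\bar Q \in \cQ(\mu)$ whose importance weight $w_{\bar Q} = \pi/\bar q$ is bounded above by $\bar w$ everywhere. I would take the unnormalized density $\bar q \coloneqq \pi/\bar w = (\alpha/q_X + \beta/q_Y)^{-1}$ (this is the balance-heuristic combination of $Q_X$ and $Q_Y$; set it to $0$ wherever $q_X = q_Y = 0$). Two applications of the Cauchy--Schwarz inequality establish $\int \bar q \,\dd\mu \le 1$: pointwise, $(\alpha/q_X + \beta/q_Y)(\alpha q_X + \beta q_Y) \ge (\alpha+\beta)^2$, so $\bar q \le (\alpha q_X + \beta q_Y)/(\alpha+\beta)^2$, and integrating gives $\int \bar q\,\dd\mu \le (\alpha+\beta)^{-1}$; then $(\alpha+\beta)(\delta + (1-\delta)) \ge (t + (1-t))^2 = 1$ yields $\alpha+\beta \ge 1$. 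Hence $Z \coloneqq \int \bar q\,\dd\mu \in (0,1]$, the probability measure $\bar Q$ with density $\bar q/Z$ lies in $\cQ(\mu)$, and $w_{\bar Q} = Z\,\bar w \le \bar w$.

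It then follows that, for every admissible $f$, $V(f) = \Pi(f^2\bar w) \ge \Pi(f^2 w_{\bar Q}) = \sigma^2(\bar Q, f)$, the last equality again because $\Pi(f) = 0$. Taking the supremum over all $f$ with $\Pi(f) = 0$, $\Pi(f^2) = 1$ and invoking~\eqref{eq:default-loss},
\begin{equation*}
\sup_{f\colon \Pi(f)=0,\,\Pi(f^2)=1} V(f) \ \ge\ R(\bar Q, \cL^2(\Pi)) \ \ge\ \inf_{Q\in\cQ(\mu)} R(Q, \cL^2(\Pi)).
\end{equation*}
Theorem~\ref{th:discrete} identifies the right-hand side as $4p(1-p)$ when $\Pi(\{x^*\}) = p > 1/2$, and Theorem~\ref{th:opt-Q-no-atom} identifies it as $1$ when $\Pi$ has no atom of mass exceeding $1/2$; this gives the two claimed inequalities.

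The main obstacle is the construction and verification in the second paragraph: recognizing that the effective combined weight $\bar w$ dominates the importance weight of a genuine trial distribution, and checking via the nested Cauchy--Schwarz estimates that the associated normalizing constant is at most $1$. The remaining work is bookkeeping, modulo the routine degenerate cases where $q_X$ or $q_Y$ vanishes on a $\Pi$-positive set — there $\bar w = \infty$ on that set, $\sup_f V(f) = \infty$, and the desired inequality is trivial.
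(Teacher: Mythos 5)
Your proof is correct and takes a genuinely different route from the paper's. The paper handles the two cases separately with bespoke constructions: in the large-atom case it applies Lemma~\ref{lm:two} to both $\sigma^2(Q_X, f)$ and $\sigma^2(Q_Y, f)$ for the same test function $f$ and then observes $a_X + a_Y \geq 1$; in the no-atom case it introduces the \emph{arithmetic-mean} density $\bar{q}_{\mathrm{AM}} \propto a_X q_X + a_Y q_Y$ (which is a probability density, but whose importance weight is \emph{not} $\bar{w}$), uses the elementary bound $u + u^{-1} \geq 2$ to show $\int \bar q_{\mathrm{AM}}\,\bar w\,\dd\mu \geq 1$, concludes $\mathrm{ess\,sup}\,\bar w \geq 1$, and then re-runs the test-function argument from Lemma~\ref{lm:loss-L2} and Theorem~\ref{th:opt-Q-no-atom} essentially from scratch. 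You instead construct the \emph{harmonic-mean} (balance-heuristic) density $\bar q \propto (a_X/q_X + a_Y/q_Y)^{-1}$, verify via nested Cauchy--Schwarz that its normalizing constant is at most $1$, and observe that the resulting $\bar Q$ is a legitimate single trial whose importance weight $w_{\bar Q} = Z\bar w$ is pointwise dominated by $\bar w$. This reduces both cases uniformly to $\sup_f V(f) \geq R(\bar Q, \cL^2(\Pi)) \geq \inf_Q R(Q, \cL^2(\Pi))$ and then to Theorems~\ref{th:opt-Q-no-atom} and~\ref{th:discrete}, with no case analysis and no re-derivation of test functions. Your approach is cleaner, more modular, and also makes a conceptually satisfying point: the minimax risk of a weighted multiple-IS estimator is bounded below by that of the balance-heuristic single-trial estimator built from the same component densities. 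You correctly flag the degenerate situation where $q_X$ or $q_Y$ vanishes on a $\Pi$-positive set (so $Z$ could be $0$); since $\pi > 0$ everywhere is a standing assumption, this is exactly the situation where $\bar w = \infty$ on a $\Pi$-positive set and $\sup_f V(f) = \infty$, so the claim is vacuous, and in every other case $\bar q > 0$ $\mu$-a.e.\ and $Z \in (0,1]$ as needed.
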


\begin{proof}
    See Section~\ref{sec:proof-multi}. 
\end{proof}

\subsection{Extension to continuous spaces}\label{sec:extension}
Consider importance sampling with a single trial distribution on $\cX = \bbR^d$, and let $\mu$ be the Lebesgue measure. Since $\Pi$ is  dominated by $\mu$, it cannot have any atom. 
However, when  $\Pi$ places most of its mass on a very small set, we   expect that using a trial distribution similar to that given in Theorem~\ref{th:discrete} is likely to offer practical advantages.  
This can be formally justified by choosing a slightly smaller function class in the minimax criterion~\eqref{eq:def-loss}.  
Specifically, for $r > 0$ and $A \subset \cX$, define 
\begin{equation}
       \cF(A, r) = \left\{ f \in \cL^2(\Pi) \colon   \sigma^2(\Pi, f)= 1, \, \mathrm{osc}(f, A) \leq r \right\},  
\end{equation}
where $\mathrm{osc}(f, A)$ denotes the oscillation of the function $f$ over the set $A$, i.e., 
\begin{equation}
    \mathrm{osc}(f, A) \coloneqq \sup_{x, y \in A} \left[ f(x) - f(y) \right]. 
\end{equation}
So functions in $\cF(A, r) $ have unit variance and do not vary too much over the set $A$. 
We then have the following result. 

\begin{proposition}\label{coro:lip}
Let $\Pi$ be an absolutely continuous distribution on $\bbR^d$. 
Let $\Pi(A) = p > 1/2$ for some $A \in \cB(\bbR^d)$. Then, 
\begin{equation}\label{eq:lower}
    \inf_{Q \in \cQ(\mu)} R(Q,  \cF(A, r)) \geq  4 \, p (1 - p). 
\end{equation}
Let the trial distribution $Q^*$ have Lebesgue density 
\begin{equation}
q^*(x) = \frac{\pi(x)}{2 p}  \ind_A(x) +  \frac{\pi(x)}{2 ( 1 - p)}  \ind_{A^c}(x). 
\end{equation}
Then,   
  $R(Q^*, \cF(A, r)) \leq 4  p (1 - p) + \left( p - \frac{1}{2} \right) p^2 r^2.$   
\end{proposition}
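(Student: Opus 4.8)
The plan is to prove the two parts of Proposition~\ref{coro:lip} separately, the lower bound being a straightforward adaptation of Lemma~\ref{lm:two} and the upper bound being the part requiring new work.

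For the lower bound \eqref{eq:lower}, I would take the test function $f$ from Lemma~\ref{lm:two} with $E = A$. This $f$ takes only two values, so $\mathrm{osc}(f, A) = 0 \le r$, meaning $f \in \cF(A, r)$. Since $\sigma^2(\Pi, f) = \Pi(f^2) = 1$ already holds by construction, Lemma~\ref{lm:two} gives $\sigma^2(Q, f) \ge 4p(1-p)$ for every $Q \in \cQ(\mu)$, hence $R(Q, \cF(A, r)) \ge 4p(1-p)$ for all $Q$, and the infimum is bounded below by the same quantity. This is essentially one line once Lemma~\ref{lm:two} is invoked.

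For the upper bound, fix $f \in \cF(A, r)$; by shifting (which changes neither $\sigma^2(Q^*, f)$ nor the oscillation) I may assume $\Pi(f) = 0$, so $\Pi(f^2) = \sigma^2(\Pi, f) = 1$. The importance weights for $Q^*$ are $w^*(x) = 2p$ on $A$ and $w^*(x) = 2(1-p)$ on $A^c$, so $\sigma^2(Q^*, f) = \Pi(f^2 w^*) = 2p\,\Pi(f^2 \ind_A) + 2(1-p)\,\Pi(f^2 \ind_{A^c}) = 2(1-p) + 2(2p-1)\,\Pi(f^2 \ind_A)$, using $\Pi(f^2 \ind_A) + \Pi(f^2 \ind_{A^c}) = 1$. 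Since $2p - 1 > 0$, the task reduces to bounding $\Pi(f^2 \ind_A)$ from above. The key point is that the oscillation constraint forces $f$ to be nearly constant on $A$: writing $m = \frac{1}{p}\Pi(f \ind_A)$ for the $\Pi$-average of $f$ over $A$, we have $|f(x) - m| \le \mathrm{osc}(f, A) \le r$ for $\Pi$-a.e.\ $x \in A$ (after possibly modifying $f$ on a null set, or more carefully arguing with the essential oscillation), hence $\Pi((f - m)^2 \ind_A) \le p r^2$. Expanding, $\Pi(f^2 \ind_A) = \Pi((f-m)^2 \ind_A) + 2m\Pi(f\ind_A) - m^2 p = \Pi((f-m)^2\ind_A) + p m^2 \le p r^2 + p m^2$. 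It remains to control $m^2$: from $\Pi(f) = 0$ we get $\Pi(f \ind_{A^c}) = -pm$, and Cauchy--Schwarz gives $(pm)^2 = \Pi(f\ind_{A^c})^2 \le \Pi(A^c)\Pi(f^2\ind_{A^c}) = (1-p)(1 - \Pi(f^2\ind_A)) \le (1-p)$, so $m^2 \le (1-p)/p^2$. Substituting back, $\Pi(f^2 \ind_A) \le p r^2 + p\cdot(1-p)/p^2 = p r^2 + (1-p)/p$, and therefore $\sigma^2(Q^*, f) \le 2(1-p) + 2(2p-1)\bigl(pr^2 + (1-p)/p\bigr)$. A short simplification shows $2(1-p) + 2(2p-1)(1-p)/p = 4p(1-p)$ (this matches the atom case where $r = 0$), leaving the extra term $2(2p-1)pr^2 = (p - \tfrac12)\cdot 4p r^2$; to land exactly on the stated bound $(p - \tfrac12)p^2 r^2$ I would instead bound $\Pi(f^2\ind_A)$ more tightly, noting that $m^2 \le (1-p)(1 - \Pi(f^2\ind_A))/p^2$ can be fed back self-consistently, or track the oscillation constraint with the sharper observation that the relevant deviation is of order $r$ only on the part of $A$ where $f$ is large --- this bookkeeping is where I expect the main friction, and reconciling the constant may require either a slightly different splitting or accepting the stated bound as a (possibly loose) sufficient estimate.

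The main obstacle, then, is not the structure of the argument but pinning down the exact constant in front of $r^2$: the oscillation bound interacts with the Cauchy--Schwarz step in a way that needs careful sequencing, and one must be attentive to whether $\mathrm{osc}$ (a genuine supremum) versus an essential oscillation is the right object when $f$ is only an $\cL^2$ function. I would handle the measurability subtlety by working with $\widetilde f = f\wedge (m + r)\vee(m - r)$ on $A$, which agrees with $f$ $\Pi$-a.e.\ on $A$ and hence does not change $\Pi(f^2 w^*)$, while making the pointwise bound $|\widetilde f - m| \le r$ literally true. Everything else is elementary expansion and two applications of Cauchy--Schwarz, exactly in the spirit of the proof of Theorem~\ref{th:discrete}.
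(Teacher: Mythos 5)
Your lower bound is exactly the paper's: Lemma~\ref{lm:two} with $E = A$, observing the test function is constant on $A$ so its oscillation is $0 \le r$. No issues there.

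For the upper bound, your structure (decompose $\sigma^2(Q^*,f)$ into $A$ and $A^c$ pieces, control $\Pi(f^2\ind_A)$ via the oscillation constraint plus Cauchy--Schwarz, exactly mirroring Theorem~\ref{th:discrete}) matches the paper, but there are two concrete gaps that prevent you from landing on the stated constant. First, your claimed identity $2(1-p) + 2(2p-1)(1-p)/p = 4p(1-p)$ is false (try $p=0.6$: the left side is $1.07$, the right $0.96$). This is precisely the symptom of \emph{not} executing the feedback step you gesture at: the crude bound $m^2 \le (1-p)/p^2$ loses the dependence on $\Pi(f^2\ind_A)$, and it is that dependence which, when fed back, collapses to $4p(1-p)$ at $r=0$. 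The paper writes both inequalities in terms of $\Pi(f\ind_A)^2$ --- namely $(1-p)[1 - \Pi(f^2\ind_A)] \ge \Pi(f\ind_A)^2$ and $\Pi(f\ind_A)^2 \ge p\,\Pi(f^2\ind_A) - p^2 r^2/4$ --- then chains them and solves the resulting linear inequality for $\Pi(f^2\ind_A)$, giving $\Pi(f^2\ind_A) \le 1 - p + p^2 r^2/4$.

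Second, and independently of the feedback issue, your oscillation estimate is off by a factor of $4$: you bound $\Pi((f-m)^2\ind_A) \le p r^2$ from $|f - m| \le r$, but the sharp statement (Popoviciu's inequality: a random variable supported on an interval of length $r$ has variance at most $r^2/4$) gives $\mathrm{Var}_{\Pi|_A}(f) \le r^2/4$, i.e.\ $\Pi((f-m)^2\ind_A) \le p r^2/4$. Even if you correctly execute the feedback with your looser bound, you end up with $\Pi(f^2\ind_A) \le 1 - p + p^2 r^2$, which yields $4p(1-p) + 4(p-\tfrac12)p^2 r^2$, four times the claimed remainder. Both fixes are needed: Popoviciu for the $1/4$, feedback for the $4p(1-p)$ leading term. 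Your instinct to truncate $f$ to $[m-r, m+r]$ on $A$ for measurability is fine but ultimately a side issue; the real friction you correctly anticipated is exactly these two points.
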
 
\begin{proof}
See Section~\ref{sec:proof-coro-lip}. 
\end{proof} 

\begin{remark}\label{rmk:continuous}
Similar to Corollary~\ref{coro:c}, we can  also consider the distribution $Q$ given by equation~\eqref{eq:opt-prop}, for which we have $R(Q, \cF(A, r)) \leq   p (1 - p) / c_0 +  (p - 1/2) p^2 r^2 $ with   $c_0 = c(1 - c)$. 
Assuming $p \uparrow 1 $ and $r^2 = o(1 - p)$,  $Q$ has worst-case asymptotic variance of order $O(1 - p)$ and is asymptotically minimax optimal with respect to   $\cF(A, r)$  if $c = 1/2$. Note that the class $\cF(A, r)$ contains most functions of practical interest; for example, for functions that are Lipschitz on $A$ with bounded Lipschitz constant, their oscillation has the same order as the radius of $A$.  
In large-sample Bayesian settings where the Bernstein--von Mises theorem applies, one can let $A$ be a neighborhood around the true parameter value with radius $O(n_{\rm{obs}}^{-1/2 + \varepsilon})$, where $n_{\rm{obs}}$ denotes the number of observations and $\varepsilon > 0$ is a small constant. 
As $n_{\rm{obs}} \rightarrow \infty$, the posterior mass on $A$ increases to one, and  
Proposition~\ref{coro:lip} suggests that an efficient trial distribution should assign probability $\approx 1/2$ to $A$.    
\end{remark}

\section{Uniformly ergodic importance-tempered MCMC}\label{sec:mc}
 
\subsection{Background on Markov chain importance sampling}\label{sec:mcis-intro} 
When direct sampling from the trial distribution $Q$ is difficult, one can build a Markov chain $(X_i)_{i \geq 1}$ with stationary distribution $Q$; denote the transition kernel by $\cT$. 
Assuming that the chain is ergodic and $Q$ dominates $\Pi$, the self-normalized importance sampling estimator  $\widetilde{\Pi}_{Q, n}(f)$ defined in equation~\eqref{eq:def-snis} is still asymptotically unbiased.
To study its variance, we construct a continuous-time Markov chain  $(Y_t)_{t \geq 0}$ as follows.  
\begin{enumerate}[(1)]
    \item  Given $Y_{T_k} = y$, draw $Y' \sim \cT(y, \cdot)$ and $W_k \sim \mathrm{Exp}\left( w(y)^{-1} \right)$. 
    \item Set $T_{k + 1} = T_k + W_k$,  $Y_t = y$ for $t \in [T_k, T_{k+1})$, and $Y_{T_{k+1} } = Y'$. 
    \item Repeat steps (1) and (2). 
\end{enumerate}
We can couple $(Y_t)_{t \geq 0}$ with the discrete-time chain $(X_i)_{i \geq 1}$ by letting $T_1 = 0$ and $Y_{T_k} = X_{k}$ for each $k \geq 1$. 
That is, $(X_i)_{i \geq 1}$ is the embedded chain (also known as the jump chain) of $(Y_t)_{t \geq 0}$, and at each state $x$, $(Y_t)_{t \geq 0}$ waits an exponential holding time with rate $w(x)^{-1}$ before the next jump. 
Note that the expected holding time $w(X)$ averaged over $X \sim Q$ always equals one, since $w = \pi / q$ implies $\int w(x) q(x) \dd x = 1$. 
Formally, the dynamics of $(Y_t)_{t \geq 0}$ is described by the generator~\citep[Chap. 4.2]{EthierKurtz1986} 
\begin{equation} 
    (\sA g)(x) = \frac{1}{w(x)} \int_{\cX} \left[ g(y) - g(x) \right] \cT(x, \dd y). 
\end{equation} 
The reason why we introduce $(Y_t)_{t \geq 0}$ becomes clear upon noticing that  the time average of   $f$ over the trajectory of $(Y_t)_{t \geq 0}$ satisfies 
\begin{equation}\label{eq:def-snis-mc} 
    \frac{1}{ T_{n+1} } \int_0^{T_{n+1}} f(Y_t) \dd t = \frac{ \sum_{i=1}^n f(Y_{T_i}) W_i }{ T_n } =  \frac{ \sum_{i=1}^n f(X_i) W_i }{ \sum_{i=1}^n W_i } \eqqcolon \widecheck{\Pi}_{Q, n}(f). 
\end{equation} 
The only difference between  $\widetilde{\Pi}_{Q, n}(f)$ and $\widecheck{\Pi}_{Q, n}(f)$ is that the importance weights are replaced by unbiased, exponentially distributed estimates. 
A routine comparison argument shows that the asymptotic variance of $\widecheck{\Pi}_{Q, n}(f)$ is at least as large as that of $\widetilde{\Pi}_{Q, n}(f)$~\citep[Lemma 2]{zhou2022rapid}, and thus the convergence rate of $(Y_t)_{t \geq 0}$ can be used as a proxy for the efficiency of  $\widetilde{\Pi}_{Q, n}(f)$. 
This approached was introduced by~\citet[Lemma 2]{zanella2019scalable}  
and  used in the recent work of~\citet{livingstone2025foundations} for studying locally-balanced MCMC methods. 

In this work, we focus on the choice  $q(x) \propto \pi(x)^\beta$ for some $ \beta \in (0, 1)$, and to emphasize the dependence on $\beta$, we denote the resulting self-normalized importance sampling estimator by 
\begin{equation}\label{eq:def-snis-beta}
     \widetilde{\Pi}_{\beta, n}(f)  \coloneqq \frac{ \sum_{i=1}^n f(X_i) \pi(X_i)^{1 - \beta} }{ \sum_{i=1}^n \pi(X_i)^{1 - \beta} }. 
\end{equation} 
Note that $\pi$ only needs to be evaluated up to a normalizing constant. 
We end this subsection by recalling the definition of geometric and uniform ergodicity for Markov processes. 
\begin{definition}\label{def:uniform-ergodic}
We say a continuous-time Markov chain $(Y_t)_{t \geq 0}$ with state space $\cX$ and invariant distribution $\Pi$ is geometrically ergodic (or exponentially ergodic), if for each $x \in \cX$, there exist constants $C(x) < \infty$ and $\theta \in (0, 1)$ such that 
$$d_{\mathrm{TV}}(   \mathrm{Law}(Y_t \mid Y_0 = x), \, \Pi ) \leq C(x) \theta^t, \quad \forall \, t > 0, $$ 
where $d_{\mathrm{TV}}$ denotes the total variation distance.  If $\sup_{x \in \cX} C(x) < \infty$, we say $(Y_t)_{t \geq 0}$ is uniformly ergodic. 
\end{definition}

\subsection{Drift conditions} \label{sec:drift}
For our theoretical analysis of the continuous-time Markov chain $(Y_t)_{t \geq 0}$, we assume that $\Pi$ is a continuous distribution defined on $(\bbR, \cB(\bbR))$ with density $\pi > 0$. 
Further, we assume that the discrete-time chain $(X_i)_{i \geq 1}$ is generated from a symmetric Metropolis--Hastings algorithm (also called ``random walk Metropolis--Hastings'') that is reversible with respect to $Q$, which has density  $q(x) \propto \pi(x)^\beta$. 
Hence, the transition kernel  $\cT$ describing the dynamics of $(X_i)_{i \geq 1}$ can be expressed by 
\begin{equation}\label{eq:transition-MH}
    \cT(x, B) =  s(x) \ind_B(x) + \int_{B} \left(  \frac{\pi(y)^\beta}{\pi(x)^\beta} \wedge 1 \right)   \kappa( y - x  ) \dd y ,   
\end{equation} 
where $s(x) = 1 - \cT(x, \bbR \setminus \{x\})$, and $\kappa$ is a probability density function such that $\kappa(z) = \kappa(-z)$ for every $z \in \bbR$.   (So $\kappa(y - x)$ is the  density of proposing the move from $x$ to $y$ in the Metropolis--Hastings algorithm.) 
Using $q(x) \propto \pi(x)^\beta$, we can express the generator of $(Y_t)_{t \geq 0}$  by  
\begin{equation}\label{eq:def-gen-A}
    (\sA g)(x) = \frac{1}{Z_\beta \, \pi(x)^{1 - \beta}} \int_{\cX} \left[ g(y) - g(x) \right] \cT(x, \dd y), 
\end{equation}
where $Z_\beta = \int_{\bbR} \pi(x)^{\beta} \dd x$. 
To study the property of $\sA$, we use the drift condition argument. Explicitly, we aim to establish 
\begin{equation}\label{eq:drift-V}
     (\sA V)(x)  \leq -   \alpha  V(x), \quad \forall \, x \in (-\infty, D) \cup (D, \infty), 
\end{equation}
for some constants $D, \alpha \in (0,  \infty)$ and a function $V \colon \bbR \rightarrow [1, \infty)$. Observe that if  
\begin{equation}
    (\cT V) (x) \coloneqq \int_{\cX} V(y) \cT(x, \dd y)  \leq  [1 - \bar{\alpha}(x) ] V(x), \quad \forall \, x \in (-\infty, D) \cup (D, \infty), 
\end{equation}
for some function $\bar{\alpha}$ such that $\bar{\alpha}(x) > 0$ whenever $|x| > D$, then~\eqref{eq:drift-V} holds with 
\begin{equation}
    \alpha =  \inf_{x \colon |x| > D} \frac{\bar{\alpha}(x)}{ Z_\beta \pi(x)^{1 - \beta} }. 
\end{equation}
Hence, in order to guarantee $\alpha > 0$, we only need $\bar{\alpha}(x)$ to decay more slowly than $\pi(x)^{1 - \beta}$ as $x \rightarrow \infty$. 
This  observation enables us to show that  the dynamics of $(Y_t)_{t \geq 0}$ differs  from standard Metropolis--Hastings algorithms in fundamental ways. 

The following theorem characterizes  $\bar{\alpha}(x)$ for any $\Pi$ with non-increasing tails, where the drift function $V$  is bounded, non-decreasing and strictly concave on $[D, \infty)$ for sufficiently large $D$. 
This setting is very different from the existing drift-and-minorization analysis of symmetric Metropolis--Hastings algorithms on $\bbR^d$~\citep{jarner2000geometric, yang2023complexity, bhattacharya2023explicit}, where the drift function is  unbounded and often takes the form $V(x) \propto \pi(x)^{-a}$ for some $a > 0$ (hence, if $\pi$ is log-concave, $V$ is log-convex and thus convex.)  
The overall strategy for proving Theorem~\ref{th:drift-general} follows the classical treatment of  Metropolis--Hastings algorithms, which can be found in, e.g.,~\citet[Chap 14.1]{douc2018markov}. 
One key distinction is that when we consider a proposal move from $x > 0$ to $y > x$ in the Metropolis--Hastings update, we simply bound the acceptance probability by $1$. This allows us to directly control the drift rate using the second derivative of $V$. 

\begin{theorem}\label{th:drift-general} 
Consider the generator given by~\eqref{eq:def-gen-A}, where $\cT$ is given by~\eqref{eq:transition-MH}.  
Suppose the following conditions hold for some  constants $\xi, D$  such that $0 < \xi \leq D < \infty$. 
\begin{enumerate}[(i)]
    \item $\pi$ is non-increasing on $[D - \xi, \infty)$. 
    \item $\kappa$ is symmetric about $0$, non-increasing on $[0, \infty)$  and  
    $  \{z\colon \kappa(z) > 0\}  = [-\xi, \xi]$. 
    \item $V$ is a  function such that $1 \leq V(x) \leq \Vmax$ for all $x \in \bbR$ and some constant $\Vmax < \infty$, and $V$ is non-decreasing, twice-differentiable and strictly concave on $[D - \xi, \infty)$. 
\end{enumerate}  
Then 
$ (\sA V)(x)  \leq -  \bar{\alpha}(x)  V(x)$ for $x \geq D$ with 
\begin{equation}\label{eq:alpha-theorem}
    \bar{\alpha}(x) = \frac{ \phi(x) \xi^3 \kappa(\xi) } { 3 \Vmax \, Z_\beta \, \pi(x)^{1 - \beta}}, \quad 
    \text{ where }  \phi(x) = \inf_{ x - \xi \leq y \leq x + \xi } |V''(y)|. 
\end{equation} 
\end{theorem}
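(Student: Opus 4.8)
The plan is to expand $(\sA V)(x)$ using the explicit form of the generator in~\eqref{eq:def-gen-A}, split the inner integral $\int [V(y) - V(x)] \cT(x, \dd y)$ according to whether the proposed move is accepted or rejected, and then bound the accepted part from above by something strictly negative. The rejection part contributes nothing since $V(x) - V(x) = 0$, so we only need to handle the absolutely continuous part, giving
\begin{equation}
(\sA V)(x) = \frac{1}{Z_\beta \pi(x)^{1-\beta}} \int_{-\xi}^{\xi} \left[ V(x+z) - V(x) \right] \left( \frac{\pi(x+z)^\beta}{\pi(x)^\beta} \wedge 1 \right) \kappa(z) \, \dd z.
\end{equation}
For $x \geq D$, condition (i) ensures $\pi$ is non-increasing on the relevant range, so for $z \geq 0$ the acceptance ratio is $\pi(x+z)^\beta / \pi(x)^\beta \le 1$, while for $z < 0$ it equals $1$. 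The first key step is to pair up the contributions from $+z$ and $-z$ for $z \in [0,\xi]$: since $\kappa$ is symmetric, the paired integrand is
\begin{equation}
\left[ V(x+z) - V(x) \right] \frac{\pi(x+z)^\beta}{\pi(x)^\beta} + \left[ V(x-z) - V(x) \right],
\end{equation}
and I would drop the factor $\pi(x+z)^\beta/\pi(x)^\beta \le 1$ on the (nonnegative-increment, but actually possibly negative) term carefully — since $V$ is non-decreasing, $V(x+z) - V(x) \ge 0$, so replacing its coefficient by the smaller quantity, i.e.\ bounding $\pi(x+z)^\beta/\pi(x)^\beta$ above by... no: I want an \emph{upper} bound on the whole integrand, and $V(x+z)-V(x)\ge 0$, so I must keep a coefficient that is \emph{at least} $\pi(x+z)^\beta/\pi(x)^\beta$; the cleanest is to note the paired sum is at most $[V(x+z) - V(x)] + [V(x-z) - V(x)] = V(x+z) + V(x-z) - 2V(x)$, which is exactly where concavity enters.

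The second key step is the concavity estimate. By Taylor's theorem with the integral (or Lagrange) form of the remainder, strict concavity of $V$ on $[D-\xi, \infty)$ (condition (iii)) gives $V(x+z) + V(x-z) - 2V(x) \le -\phi(x) z^2$ with $\phi(x) = \inf_{|y - x| \le \xi} |V''(y)|$, using that both $x+z$ and $x-z$ lie in $[x-\xi, x+\xi] \subset [D-\xi,\infty)$. Then, using that $\kappa$ is non-increasing on $[0,\xi]$ so $\kappa(z) \ge \kappa(\xi)$ for $z \in [0,\xi]$, I integrate:
\begin{equation}
\int_0^\xi (-\phi(x) z^2) \kappa(z)\, \dd z \le -\phi(x)\kappa(\xi) \int_0^\xi z^2 \, \dd z = -\frac{\phi(x)\kappa(\xi)\xi^3}{3}.
\end{equation}
Dividing by $Z_\beta \pi(x)^{1-\beta}$ and bounding $V(x) \le \Vmax$ to convert the additive bound into a multiplicative one, $(\sA V)(x) \le -\frac{\phi(x)\xi^3\kappa(\xi)}{3 Z_\beta \pi(x)^{1-\beta}} \le -\frac{\phi(x)\xi^3\kappa(\xi)}{3\Vmax Z_\beta \pi(x)^{1-\beta}} V(x)$, which is exactly~\eqref{eq:alpha-theorem}.

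The main obstacle I anticipate is the bookkeeping in the pairing step: one must be careful that the sign of each increment is what makes the ``drop the acceptance ratio'' move legitimate in the direction of an upper bound, and that the self-transition mass $s(x)$ and the possibility that $\kappa$ vanishes outside $[-\xi,\xi]$ are handled so that only $z \in [-\xi,\xi]$ with $x \pm z \ge D - \xi$ contribute. A secondary technical point is justifying the second-order Taylor remainder bound uniformly — this needs $V$ twice-differentiable on the whole half-line $[D-\xi,\infty)$, which is granted, and $\phi(x) > 0$ by strict concavity, so the bound is genuinely negative. Everything else is elementary estimation once the pairing and concavity inequalities are in place.
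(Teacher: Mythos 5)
Your proposal is correct and follows essentially the same approach as the paper's proof: both restrict the integral to $[-\xi,\xi]$, pair $\pm z$ via the symmetry of $\kappa$, drop the acceptance factor $\pi(x+z)^\beta/\pi(x)^\beta \le 1$ on the nonnegative increment $V(x+z)-V(x)\ge 0$, apply the concavity bound $V(x+z)+V(x-z)-2V(x) \le -\phi(x)z^2$, and integrate using $\kappa(z)\ge\kappa(\xi)$. The only cosmetic difference is that you introduce the $1/\Vmax$ factor at the very end to pass from the additive bound to the multiplicative drift inequality $(\sA V)(x) \le -\bar\alpha(x)V(x)$, whereas the paper folds it into the intermediate bound on the paired integrand $h(x,z)$.
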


\begin{proof} 
Fix some $x \geq D$. Since $\cT(x, \dd y) = 0$ when $\kappa(y - x) = 0$, we have 
\begin{equation}\label{eq:drift1}
 \int_{\bbR} \left( \frac{V(y)}{V(x)} - 1  \right) \cT(x, \dd y) =   \int_{x - \xi}^{x + \xi} \left( \frac{V(y)}{V(x)} - 1  \right) \cT(x, \dd y). 
\end{equation}   
Since $\pi(x)$ is non-increasing on $[D - \xi, \infty)$,  $ \cT(x, \dd y)  =  \kappa(y - x)  \dd y$ if $y \in [x - \xi, x)$, and $ \cT(x, \dd y)  =  \kappa(y - x)  [ \pi(y) / \pi(x) ]^\beta \dd y$ if $ y \in (x, x + \xi]$. Using $\kappa(z) = \kappa(-z)$, we obtain that 
\begin{equation} \label{eq:drift2}
  \int_{x - \xi}^{x + \xi} \left( \frac{V(y)}{V(x)} - 1  \right) \cT(x, \dd y)  
    = \int_0^\xi   \kappa(z)   h(x, z)   \dd z, 
\end{equation}
where 
\begin{equation} 
    h(x, z) =  \left( \frac{V(x - z)}{V(x)} - 1  \right) +  \left( \frac{V(x + z)}{V(x)} - 1  \right)  \frac{\pi(x+z)^\beta}{\pi(x)^\beta}. 
\end{equation} 
For $ z > 0$, we have $V(x + z) / V(x) \geq 1$ and $\pi(x + z) / \pi(x) \leq 1$, from which it follows that 
\begin{equation}\label{eq:h-bound}
     h(x, z) \leq   \frac{ 1}{\Vmax}\left[ V(x - z) + V(x + z) - 2 V (x)  \right].
\end{equation} 
Since $V$ is twice differentiable, we have 
\begin{equation}
    V(x+z) + V(x-z) - 2V(x) = \int_0^z \left[ V'(x+u) - V'(x - u) \right] \dd u = \int_0^z \int_{x-u}^{x+u}  V''(t) \, \dd t \, \dd u. 
\end{equation}
The strict concavity of $V$ yields   $V''(t) < 0$ for all $t \geq D - \xi$. 
Therefore, for every $x \geq D$ and $z \in (0, \xi]$, 
\begin{equation}\label{eq:V-concave}
 V(x - z) + V(x + z) - 2 V(x) \leq -  \int_0^z \int_{x-u}^{x+u}  \phi(x) \, \dd t \, \dd u   = - z^2 \phi(x), 
\end{equation} 
where $\phi(x)$ is defined in~\eqref{eq:alpha-theorem}. 
Using inequalities~\eqref{eq:h-bound}, ~\eqref{eq:V-concave} and that $\kappa(z)$ is non-increasing on $[0, \infty)$,  we find    
\begin{equation}\label{eq:drift3} 
    \int_0^\xi  \kappa(z)   h(x, z)   \dd z 
\leq - \frac{ \phi(x) \kappa(\xi) }{\Vmax} \int_0^{\xi} z^2  = - \frac{ \phi(x) \xi^3 \kappa(\xi) }{3\Vmax}. 
\end{equation}
Combining~\eqref{eq:drift1},~\eqref{eq:drift2}, and~\eqref{eq:drift3}, we arrive at 
\begin{equation}
     \int_{\bbR} \left( \frac{V(y)}{V(x)} - 1  \right) \cT(x, \dd y) 
   \leq  - \frac{ \phi(x) \xi^3 \kappa(\xi) }{3M}, 
\end{equation}
and the conclusion follows from the definition of $\sA$. 
\end{proof}

\begin{remark}\label{rmk:drift}
For a symmetric density $\pi$, we can establish the drift condition~\eqref{eq:drift-V} by applying Theorem~\ref{th:drift-general} on $[D, \infty)$ and $(-\infty, -D]$ separately and verifying that $\inf_{x > D} \bar{\alpha}(x) > 0$. 
Note that the assumptions on $V$ imply that  $\lim_{x \rightarrow \infty} \phi(x) = 0$, and thus by~\eqref{eq:alpha-theorem}, we need $\pi(x)^{1 - \beta}$ to decay faster than $\phi(x)$.  
As explained earlier, the factor $\pi(x)^{1 - \beta}$   arises from the use of importance sampling, which effectively forces the continuous-time chain $(Y_t)_{t \geq 0}$ to immediately leave a state $x$ whenever $\pi(x)$ is too small. Without the factor $\pi(x)^{1 - \beta}$, the drift condition~\eqref{eq:alpha-theorem} would not be useful, which also explains why such  drift functions have not been used in the literature on the convergence analysis of symmetric Metropolis--Hastings algorithms.  
\end{remark}

\begin{remark}\label{rmk:truncate}
Condition (ii) assumes that the proposal density $\kappa$ is only supported on $[-\xi, \xi]$.  
The purpose of introducing this truncation parameter is only to simplify some technical arguments in the proof. 
For all examples that will be analyzed in the rest of this section, we could also work with the untruncated version of $\kappa$ (assuming that it has finite variance), but this will complicate the calculation and require additional constraints on $D$.
Specifically, without the truncation, equation~\eqref{eq:drift1} does not hold, and we need to select sufficiently large $D$ so that $\int_{\cX} V(y) \cT(x, \dd y) \approx \int_0^{2x} V(y) \cT(x, \dd y)$ for $x \geq D$.  
\end{remark}

Establishing the drift condition~\eqref{eq:drift-V} with a bounded drift function $V$ yields two importance consequences, both of which follow from the general theory developed in~\citet{down1995exponential}. 
First, it implies that the continuous-time Markov chain $(Y_t)_{t\geq 0}$ is uniformly ergodic (if $V$ is unbounded, then $(Y_t)_{t \geq 0}$ is geometrically ergodic.)
Second, the hitting time 
\begin{equation}\label{eq:def-tau}
     \tau_D = \inf \{t  \geq 0 \colon Y_t \in [-D, D] \}
\end{equation}
satisfies $\bbE_x[ e^{\alpha \tau_D} ] < \infty$, where $\bbE_x$ denotes the probability measure under which $Y_0 = x$ a.s.  That is, $\tau_D$ has a finite exponential moment uniformly bounded over all initial states. 
In contrast,   the symmetric Metropolis--Hastings algorithm with transition kernel $\cT$ cannot be uniformly ergodic for any $\pi > 0$ on $\bbR$~\citep{mengersen1996rates}. One intuitive reason is that the time needed for the Metropolis--Hastings algorithm to enter the high-probability region can be made arbitrarily large by choosing a bad initialization. 
 
\subsection{Uniform ergodicity}\label{sec:ergodic} 
We present two examples where we use Theorem~\ref{th:drift-general} to establish the drift condition~\eqref{eq:drift-V} with drift rate $\alpha$ explicitly bounded away from zero.  
For the   density $\kappa$, we fix it to be the truncation of the normal distribution with mean zero and standard deviation $\xi > 0$, i.e., 
\begin{equation}\label{eq:kappa}
      \kappa(z) \propto  e^{-  z^2 / (2\xi^2) }   \ind_{[-\xi, \xi]} (z). 
\end{equation}   
In Proposition~\ref{coro:exp}, we assume the tails of $\Pi$ decay faster than exponential, and in Proposition~\ref{coro:lip}, we allow polynomially decaying tails. In both cases, $V$ takes the form $V(x) = C - U(x)^{-1}$ for all $x \geq D$ and some constant $C < \infty$. The function $U$ grows exponentially in the first example and polynomially in the second. 
The uniform ergodicity of $(Y_t)_{t \geq 0}$ in Proposition~\ref{coro:exp} is just a special case of Proposition~\ref{coro:poly}, since the density given by~\eqref{eq:super-exp-target} satisfies condition~\eqref{eq:power-law} for any $\gamma > 0$.  

\begin{proposition}\label{coro:exp}
  Let $(Y_t)_{t \geq 0}$ be the continuous-time Markov chain with generator~\eqref{eq:def-gen-A}, where $\cT$ is given by~\eqref{eq:transition-MH} and $\kappa$ by~\eqref{eq:kappa}. 
  Let
  \begin{equation}\label{eq:super-exp-target}
      \pi(x) \propto \exp\left( - a |x|^\omega\right), \quad \forall \, x \in \bbR,  
  \end{equation} 
  for some $a > 0, \omega> 1$. If $0 < \beta < 1$, then the following conclusions hold.  
\begin{enumerate}[(i)] 
    \item $(Y_t)_{t\geq 0}$ is uniformly ergodic.  
    \item For any  
\begin{equation} 
D \geq \max\left\{ \xi,   \;  \left[  a \xi  \omega (1 - \beta)  \right]^{ - 1/(\omega - 1)} \right\}, 
\end{equation} 
we have $\bbE_x[ e^{\alpha \tau_D} ] \leq 2$ for any $x \in \bbR$, where      \begin{equation}\label{eq:alpha-gaus}
    \alpha  = \frac{  \beta^{1/\omega} }{49}     \exp\left[   a  (1 - \beta)D^\omega - \xi^{-1} D  \right].  
    \end{equation} 
\end{enumerate}  
\end{proposition}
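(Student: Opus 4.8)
The plan is to exhibit a bounded, strictly concave drift function to which Theorem~\ref{th:drift-general} applies, derive the state-dependent drift condition~\eqref{eq:drift-V} with an explicit rate, and then invoke the general theory of~\citet{down1995exponential} as recorded in the discussion following Theorem~\ref{th:drift-general}. Since $\pi$ in~\eqref{eq:super-exp-target} is symmetric, by Remark~\ref{rmk:drift} it suffices to construct $V$ on $[D-\xi,\infty)$, verify the hypotheses of Theorem~\ref{th:drift-general} there (the left tail being handled identically via the reflection $x\mapsto-x$), and check that the resulting rate $\bar\alpha$ from~\eqref{eq:alpha-theorem} is bounded away from zero on $\{|x|>D\}$; on the middle interval $V$ only has to satisfy $1\le V\le\Vmax$, so one interpolates.

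Concretely, I would take $V(x)=C-e^{-x/\xi}$ for $x\ge D-\xi$ (and mirrored on $(-\infty,-(D-\xi)]$), with $C=1+e^{-(D-\xi)/\xi}$ chosen so that $V(D-\xi)=1$; because $V$ is increasing this gives $1\le V\le C=\Vmax$, and $D\ge\xi$ forces $C\le 2$. This $V$ is twice differentiable, non-decreasing ($V'=\xi^{-1}e^{-x/\xi}>0$) and strictly concave ($V''=-\xi^{-2}e^{-x/\xi}<0$) on $[D-\xi,\infty)$, so condition (iii) holds; condition (i) holds since $\pi$ is non-increasing on $[D-\xi,\infty)\supseteq[0,\infty)$; condition (ii) is immediate for $\kappa$ in~\eqref{eq:kappa}. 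Theorem~\ref{th:drift-general} then yields $(\sA V)(x)\le-\bar\alpha(x)V(x)$ for $x\ge D$ with $\phi(x)=\inf_{x-\xi\le y\le x+\xi}\xi^{-2}e^{-y/\xi}=\xi^{-2}e^{-(x+\xi)/\xi}$, since $e^{-y/\xi}$ is decreasing.

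Next I would show $\bar\alpha$ is non-decreasing on $[D,\infty)$, so that $\alpha:=\inf_{|x|>D}\bar\alpha(x)=\bar\alpha(D)>0$. Using $\pi(x)^{1-\beta}\propto e^{-a(1-\beta)|x|^\omega}$, the rate $\bar\alpha(x)$ equals a positive constant times $e^{g(x)}$ with $g(x)=a(1-\beta)x^\omega-x/\xi$, and $g'(x)=a(1-\beta)\omega x^{\omega-1}-\xi^{-1}\ge0$ exactly when $x\ge[a\xi\omega(1-\beta)]^{-1/(\omega-1)}$, which holds for $x\ge D$ by the hypothesis on $D$. For the explicit value~\eqref{eq:alpha-gaus} I would then substitute into~\eqref{eq:alpha-theorem} the exact quantities $\kappa(\xi)=e^{-1/2}/(\xi\int_{-1}^1 e^{-u^2/2}\,\dd u)$, $\phi(D)=\xi^{-2}e^{-(D+\xi)/\xi}$ and $\Vmax\le2$, and — the one genuinely delicate point — simplify the normalizing constants: writing $\pi=c_1e^{-a|x|^\omega}$, one has $Z_\beta\,\pi(x)^{1-\beta}=(Z_1)^{-1}\big(\int e^{-a\beta|x|^\omega}\dd x\big)\,e^{-a(1-\beta)|x|^\omega}$, and the scaling $x=\beta^{1/\omega}u$ gives $Z_1/\!\int e^{-a\beta|x|^\omega}\dd x=\beta^{1/\omega}$; collecting the residual absolute constant $e^{-3/2}\big/\big(6\int_{-1}^1 e^{-u^2/2}\dd u\big)$ and bounding it below by $1/49$ yields $\alpha\ge\tfrac{\beta^{1/\omega}}{49}e^{a(1-\beta)D^\omega-D/\xi}$.

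Finally, both conclusions follow from~\citet{down1995exponential}: boundedness of $V$ together with~\eqref{eq:drift-V} and the fact that compact intervals are small sets for the jump chain (a random-walk Metropolis kernel with a strictly positive continuous proposal density on $[-\xi,\xi]$) gives uniform ergodicity of $(Y_t)_{t\ge0}$, proving (i); for (ii), the drift inequality makes $e^{\alpha(t\wedge\tau_D)}V(Y_{t\wedge\tau_D})$ a supermartingale under $\bbE_x$, whence $V(x)\ge\bbE_x[e^{\alpha(t\wedge\tau_D)}V(Y_{t\wedge\tau_D})]\ge\bbE_x[e^{\alpha(t\wedge\tau_D)}]$ (using $V\ge1$), and letting $t\to\infty$ with monotone convergence gives $\bbE_x[e^{\alpha\tau_D}]\le V(x)\le\Vmax\le2$ for $|x|>D$, while $\tau_D=0$ makes the bound trivial for $|x|\le D$. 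I expect the main obstacle to be parts of the third paragraph: fixing the scale in $V$ (here $e^{-x/\xi}$) so the drift rate degrades as slowly as possible while the constants line up with~\eqref{eq:alpha-gaus}, and the normalizing-constant bookkeeping that produces the $\beta^{1/\omega}$ factor; the martingale step needs only the standard justification of Dynkin's formula for this (non-explosive) pure-jump process, which is subsumed by the cited theory.
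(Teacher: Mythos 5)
Your proposal is correct and follows essentially the same route as the paper: the drift function is the same exponential $C - e^{-x/\xi}$ (the paper simply fixes $C = 2$ globally via $V(x)=2-e^{-|x|/\xi}$, which saves the interpolation in the middle interval but yields the identical bound $\Vmax\le 2$), the rate computation via the $\beta^{1/\omega}$ rescaling and the monotonicity of $g(x)=a(1-\beta)x^\omega-x/\xi$ for $x\ge D$ is the same, and the numerical bookkeeping gives the same $1/49$. The only cosmetic deviation is in part~(ii), where you give a direct Dynkin supermartingale argument in place of the paper's appeal to Theorem~6.1 of \citet{down1995exponential}; the two are equivalent and your version is a self-contained proof of what that theorem supplies.
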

\begin{proof}
    See Section~\ref{sec:proof-exp}. 
\end{proof}

\begin{proposition}\label{coro:poly}
  Let $(Y_t)_{t \geq 0}$ be the continuous-time Markov chain with generator~\eqref{eq:def-gen-A}, where $\cT$ is given by~\eqref{eq:transition-MH} and $\kappa$ by~\eqref{eq:kappa}. 
  Let $\pi$ be an even function which is non-increasing on $(0, \infty)$ with 
  \begin{equation}\label{eq:power-law}
       \limsup_{x \rightarrow \infty}  \pi(x)|x|^\gamma  < \infty, 
  \end{equation} 
  for some $\gamma > 3$. If $1/\gamma < \beta < (\gamma - 2) / \gamma$,   then the following conclusions hold.  
\begin{enumerate}[(i)]
    \item $(Y_t)_{t\geq 0}$ is uniformly ergodic.  
    \item  For any $D \geq 2 \xi$ and $x \in \bbR$,  $\bbE_x[ e^{\alpha \tau_D} ] \leq 1 + \xi^{2 - \gamma (1 - \beta) }$ where $\alpha > 0$ is a constant depending on $\gamma, \beta, \xi, D$. 
\end{enumerate}  
\end{proposition}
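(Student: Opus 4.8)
The plan is to deduce both claims from Theorem~\ref{th:drift-general} by exhibiting a bounded, non-decreasing, strictly concave drift function $V$ on $[D-\xi,\infty)$ for which the resulting state-dependent drift rate $\bar\alpha(x)$ in~\eqref{eq:alpha-theorem} stays bounded away from zero on $\{|x| > D\}$. Since $\pi$ is even and non-increasing on $(0,\infty)$, it suffices (as in Remark~\ref{rmk:drift}) to work on $[D,\infty)$ and use symmetry to get the matching bound on $(-\infty,-D]$; on the compact interval $[-D,D]$ the usual minorization holds because $\pi>0$ is bounded below there and $\kappa$ has full support on $[-\xi,\xi]$. Following the hint in Section~\ref{sec:ergodic}, I would take $V(x) = C - U(x)^{-1}$ for $x \ge D$ with $U(x) = x^{\rho}$ for a suitable exponent $\rho>0$ (and $C$ chosen so that $V \ge 1$ everywhere, extending $V$ smoothly and boundedly to $(-\infty,D)$ — e.g. by reflection). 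Then $U$ grows polynomially, $V$ is increasing, bounded above by $C$, and one computes $V''(x) = -\rho(\rho+1)x^{-\rho-2} + \rho^2 x^{-2\rho-2}$, which is strictly negative for $x$ large; so $\phi(x) = \inf_{|y-x|\le\xi}|V''(y)| \asymp x^{-\rho-2}$ as $x\to\infty$.

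The crux is the rate balance dictated by~\eqref{eq:alpha-theorem}: up to constants depending on $\xi$,
\begin{equation}
\bar\alpha(x) \asymp \frac{\phi(x)}{\pi(x)^{1-\beta}} \asymp \frac{x^{-\rho-2}}{x^{-\gamma(1-\beta)}} = x^{\gamma(1-\beta) - \rho - 2},
\end{equation}
using the tail bound~\eqref{eq:power-law} that $\pi(x) \lesssim x^{-\gamma}$. For $\inf_{x>D}\bar\alpha(x)>0$ we need the exponent $\gamma(1-\beta)-\rho-2 \ge 0$, i.e. $\rho \le \gamma(1-\beta) - 2$; this is positive precisely when $\beta < (\gamma-2)/\gamma$, which is the upper constraint on $\beta$. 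So I would pick $\rho \in (0,\gamma(1-\beta)-2]$ — for instance $\rho = \gamma(1-\beta)-2$ — whenever $\beta < (\gamma-2)/\gamma$, and this immediately gives part~(i): the drift condition~\eqref{eq:drift-V} holds with a bounded $V$, hence $(Y_t)_{t\ge0}$ is uniformly ergodic by the Down--Meyn--Tweedie theory cited after Remark~\ref{rmk:truncate}.

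For part~(ii) I would track constants explicitly. With $D \ge 2\xi$, on $[D,\infty)$ one has $\phi(x) \ge c_\rho x^{-\rho-2}$ for an explicit $c_\rho$ (using $x-\xi \ge x/2$), and $\pi(x)^{1-\beta} \le c_\pi x^{-\gamma(1-\beta)}$ from~\eqref{eq:power-law}; plugging into~\eqref{eq:alpha-theorem} and choosing $\rho = \gamma(1-\beta)-2$ makes the $x$-dependence cancel, leaving $\alpha = \inf_{|x|>D}\bar\alpha(x)$ a positive constant depending only on $\gamma,\beta,\xi,D$ (through $\Vmax = C$, $\kappa(\xi)$, $Z_\beta$, $c_\rho$, $c_\pi$). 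The bound $\bbE_x[e^{\alpha\tau_D}] \le 1 + \xi^{2-\gamma(1-\beta)}$ then follows from the standard argument that a drift condition $(\sA V)(x) \le -\alpha V(x)$ off $[-D,D]$ with $1 \le V \le \Vmax$ yields $\bbE_x[e^{\alpha\tau_D}] \le V(x) \le \Vmax$; here I would calibrate the additive constant $C$ in $V = C - U^{-1}$ so that $\Vmax = \sup V = C = 1 + D^{-\rho} \cdot(\text{something})$ simplifies to the stated $1 + \xi^{2-\gamma(1-\beta)}$ form, which amounts to matching the normalization of $U$ at the truncation scale $\xi$.

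The main obstacle I anticipate is \emph{not} the rate computation itself but verifying the hypotheses of Theorem~\ref{th:drift-general} cleanly — in particular that $V$ can be chosen globally bounded with $V \ge 1$, twice-differentiable, non-decreasing and \emph{strictly} concave on all of $[D-\xi,\infty)$ (not just for very large $x$), since $V'' $ as written only becomes negative past some threshold; one must either enlarge $D$ or tweak $U$ (e.g. $U(x) = (x+b)^\rho$ with an offset $b$, or add a lower-order correction) so that strict concavity holds from $D-\xi$ onward while keeping all the asymptotic rate bounds intact. The second delicate point is the bookkeeping that makes the exponential-moment constant come out exactly as $1+\xi^{2-\gamma(1-\beta)}$ rather than merely $O(1)$; this forces a specific choice of the free normalization constants in $V$, and I would handle it last, after the qualitative drift condition is secured. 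The lower constraint $\beta > 1/\gamma$ is not needed for this direction — it enters only through $Z_\beta = \int \pi^\beta < \infty$, which requires $\gamma\beta > 1$; I would note this is where that assumption is used.
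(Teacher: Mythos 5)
Your proposal follows essentially the same route as the paper's proof: choose $\nu = \rho = \gamma(1-\beta) - 2 > 0$, take $V$ of the form $C - U^{-1}$ with $U$ a power of $|x|$, apply Theorem~\ref{th:drift-general}, and observe that the tail condition~\eqref{eq:power-law} makes $\phi(x)/\pi(x)^{1-\beta}$ bounded below. The substance is right; one place where you go slightly off the rails is the derivative computation. With $V(x) = C - x^{-\rho}$ you get simply $V''(x) = -\rho(\rho+1)\,x^{-\rho-2}$, which is strictly negative for \emph{all} $x>0$; the extra term $\rho^2 x^{-2\rho-2}$ in your formula is spurious (it cancels in the quotient-rule computation for $(1/U)''$). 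Consequently the ``main obstacle'' you anticipate — that $V''$ only becomes negative past some threshold, necessitating an offset $b$ or lower-order corrections — does not exist, and no such tweak is needed. The paper's choice
\begin{equation}
V(x) = 1 + \xi^{-\nu} - \bigl(|x| \vee \xi\bigr)^{-\nu}
\end{equation}
implements your plan cleanly: the truncation at $\xi$ forces $V \geq 1$ everywhere (with equality on $[-\xi,\xi]$), the function is twice differentiable and strictly concave on $[D-\xi,\infty)$ for any $D \geq 2\xi$, and the supremum is $\Vmax = 1 + \xi^{-\nu} = 1 + \xi^{2-\gamma(1-\beta)}$, which is exactly the constant in part~(ii) once one applies \citet[Th.~6.1]{down1995exponential} as you suggest. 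Your second ``delicate point'' — calibrating $C$ so that the exponential-moment constant comes out on the nose — is thus handled automatically by tying the normalization to the proposal scale $\xi$.
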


\begin{proof}
    See Section~\ref{sec:proof-poly}. 
\end{proof}

\begin{remark}\label{rmk:rate}
By Markov's inequality,  $\bbE_x[ e^{\alpha \tau_D} ] < \infty$ implies that $\tau_D = O_p(\alpha^{-1})$.  
Hence, for target distributions with super-exponential tails, Proposition~\ref{coro:exp} guarantees that $\tau_D$ vanishes faster as $D$ grows. 
In particular, when $\Pi$ is an asymptotically normal posterior distribution for a Bayesian statistical model, we expect that $\pi$ decays at rate $e^{-a x^2}$ where $a$ also tends to infinity as the number of observations increases. 
In this case, for any fixed $D \geq \xi$,  $\tau_D$ decreases exponentially in $a$.  
\end{remark}

It is natural to ask whether the condition on $\beta$ and $\gamma$ in Proposition~\ref{coro:poly} is optimal.   
First, observe that $\beta > 1/\gamma$ is always needed to ensure that $\pi(x)^\beta$ is integrable (and thus  $Z_\beta < \infty$) when $\pi$ decays at rate $|x|^{-\gamma}$. So it only remains to investigate whether the upper bound $\beta < (\gamma - 2)/\gamma$ is also necessary, which is equivalent to $\gamma (1 - \beta) > 2$. 
Observe that when $\gamma (1 - \beta) = 2$, Theorem~\ref{th:drift-general} cannot be used to establish drift condition~\eqref{eq:drift-V},  since by equation~\eqref{eq:alpha-theorem}, $|V''(x)|$ would have to decay at rate $|x|^{-2}$, which implies that $V$ grows to infinity at rate $\log |x|$, violating the boundedness condition required by Theorem~\ref{th:drift-general}. 
It turns out that this is not a limitation of Theorem~\ref{th:drift-general} itself. 
Using a different drift condition argument, we are able to prove that the condition $\beta < (\gamma - 2) / \gamma$ is both sufficient and necessary for $(Y_t)_{t\geq 0}$ to be uniformly ergodic.

\begin{theorem}\label{th:tail}
 Let $(Y_t)_{t \geq 0}$ be the continuous-time Markov chain with generator~\eqref{eq:def-gen-A}, where $\cT$ is given by~\eqref{eq:transition-MH} and $\kappa$ by~\eqref{eq:kappa}. 
 Let  
  \begin{equation}\label{eq:poly-distr}
       \pi(x) = \frac{\gamma - 1}{2}  (1 + |x|)^{-\gamma}, \quad \forall \, x \in \bbR,  
  \end{equation}  
  for some $\gamma > 1$. Let $\beta > 1/\gamma$ so that $Z_\beta < \infty$. 
  Then $(Y_t)_{t\geq 0}$ is uniformly ergodic if and only if  $\beta < (\gamma - 2) / \gamma$. 
\end{theorem}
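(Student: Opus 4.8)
The ``if'' direction I would dispatch directly via Proposition~\ref{coro:poly}: the density in~\eqref{eq:poly-distr} is even, non-increasing on $(0,\infty)$, and obeys $\pi(x)|x|^\gamma\to(\gamma-1)/2$, so~\eqref{eq:power-law} holds, and the existence of a $\beta$ with $1/\gamma<\beta<(\gamma-2)/\gamma$ forces $\gamma>3$; all hypotheses of Proposition~\ref{coro:poly} are then met. The substance is the ``only if'' direction, so assume $\beta\ge(\gamma-2)/\gamma$, equivalently $\gamma(1-\beta)\le 2$ (note also $0<\gamma(1-\beta)$ and $\gamma\beta>1$). The plan is to show $\sup_{x}\bbE_x[\tau_D]=\infty$ for the hitting time $\tau_D$ of $[-D,D]$ from~\eqref{eq:def-tau}, with $D$ a constant to be chosen; this is incompatible with uniform ergodicity, since if $(Y_t)_{t\ge 0}$ were uniformly ergodic the drift characterisation of \citet{down1995exponential} would supply a bounded $V\ge 1$ and constants $\alpha>0,b<\infty$ with $\sA V\le -\alpha V+b\ind_{[-D,D]}$, and then, $V\ge 1$ making $V(Y_{t\wedge\tau_D})+\alpha(t\wedge\tau_D)$ a supermartingale, $\sup_x\bbE_x[\tau_D]\le\|V\|_\infty/\alpha<\infty$.

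To reach the contradiction I would first rewrite the continuous-time hitting time via the jump chain: with $Y_0=x\notin[-D,D]$ and $(X_i)$ the embedded random-walk Metropolis chain, let $\sigma$ be the first index at which $X_i\in[-D,D]$; since the holding time at $X_i$ has conditional mean $w(X_i)=Z_\beta\,\pi(X_i)^{1-\beta}$, the tower property gives $\bbE_x[\tau_D]=\bbE_x\big[\sum_{i<\sigma}w(X_i)\big]$, and $\sigma<\infty$ a.s.\ because $(X_i)$ is positive Harris recurrent with invariant law $\propto\pi^\beta$. It therefore suffices to prove $\bbE_x\big[\sum_{i<\sigma}w(X_i)\big]\to\infty$ as $|x|\to\infty$, and for this I would run a drift argument for the discrete kernel $\cT$ of~\eqref{eq:transition-MH} — a ``different'' drift argument in that, unlike Theorem~\ref{th:drift-general}, it yields a lower bound and uses an \emph{unbounded} but slowly growing Foster--Lyapunov function. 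Take $g\colon\bbR\to[0,\infty)$ smooth, non-decreasing in $|x|$, identically $0$ on $[-D,D]$, and with $\log(1+|x|)$-type growth for large $|x|$ (so $g$ is unbounded while $|g'(x)|\asymp(1+|x|)^{-1}$, $|g''(x)|\asymp(1+|x|)^{-2}$), together with a smooth companion $G$ comparable to $g^2$ with bounded derivative. The crux is a second-order expansion of
\[
  \cT g(x)-g(x)=\int_0^{\xi}\kappa(z)\Big\{[g(x-z)-g(x)]+[g(x+z)-g(x)]\big(\pi(x+z)/\pi(x)\big)^{\beta}\Big\}\,\dd z,\qquad x\ge D .
\]
Using $\big(\pi(x+z)/\pi(x)\big)^{\beta}=\big(1+z/(1+x)\big)^{-\gamma\beta}=1-\gamma\beta z/(1+x)+O(x^{-2})$, the order-$z$ term $g'(x)z$ from the $x\to x+z$ move cancels the $-g'(x)z$ from the $x\to x-z$ move up to a correction $\asymp\gamma\beta g'(x)z^2/(1+x)$, leaving $\big(g''(x)-\gamma\beta g'(x)/(1+x)\big)z^2+O(x^{-3})\asymp-(1+x)^{-2}$; hence $\cT g(x)-g(x)\ge-c_3(1+|x|)^{-2}$ for all $|x|\ge D$ once $D$ is large enough (depending only on $\gamma,\beta,\xi$). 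Because $\gamma(1-\beta)\le 2$ we have $(1+|x|)^{-2}\le c_w^{-1}w(x)$ with $c_w=Z_\beta((\gamma-1)/2)^{1-\beta}$, so $\cT g(x)-g(x)\ge-(c_3/c_w)w(x)$ off $[-D,D]$, i.e.\ $g(X_{n\wedge\sigma})+(c_3/c_w)\sum_{i<n\wedge\sigma}w(X_i)$ is a submartingale.

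Passing to the limit is then routine: $\cT G(x)-G(x)\le C_6$ (bounded above, since $|G'|$ bounded gives $|G(x\pm z)-G(x)|\le\|G'\|_\infty\xi$) together with $\bbE_x[\sigma]<\infty$ for each fixed $x$ (a standard drift estimate via $V_0(x)=(1+|x|)^2$, for which $\cT V_0-V_0\le-c_8<0$ off a compact set precisely because $\gamma\beta>1$) yields $\sup_n\bbE_x[G(X_{n\wedge\sigma})]<\infty$; hence $\{g(X_{n\wedge\sigma})\}_n$ is uniformly integrable and, as $g(X_{n\wedge\sigma})\to g(X_\sigma)=0$ a.s., $\bbE_x[g(X_{n\wedge\sigma})]\to 0$. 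Letting $n\to\infty$ in the submartingale inequality then gives $\bbE_x\big[\sum_{i<\sigma}w(X_i)\big]\ge(c_w/c_3)\,g(x)\to\infty$ as $|x|\to\infty$, completing the argument.

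The hard part will be the expansion estimate and its bookkeeping near the endpoints of $[-D,D]$: one must verify that the cancellation of the order-$z$ drift terms survives both the acceptance factor and the truncation of $\kappa$, and that $g$ can be glued to $0$ without spoiling $\cT g-g\ge-(c_3/c_w)w$ on the transition band $[D,D+\xi]$ (handled by choosing the interpolation small relative to $\inf_{[D,D+\xi]}w>0$). This computation is exactly what pins down $\gamma(1-\beta)=2$ as the threshold: an admissible $g$ must have $|g''|$ decaying no faster than $\pi^{1-\beta}\asymp(1+|x|)^{-\gamma(1-\beta)}$, which when $\gamma(1-\beta)\le 2$ forces at most (sub)logarithmic — hence unbounded — growth, whereas when $\gamma(1-\beta)>2$ one may take $g$ bounded, recovering the sufficiency already established in Proposition~\ref{coro:poly}.
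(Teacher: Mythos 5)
Your proposal is correct and hinges on the same core idea as the paper: the function $\log(1+|x|)$ (or a smooth surrogate) serves as a Foster--Lyapunov function with too-slow growth, and the resulting drift lower bound shows $\sup_x\bbE_x[\tau_D]=\infty$ when $\gamma(1-\beta)\le 2$, contradicting uniform ergodicity. In fact your discrete-kernel inequality $\cT g(x)-g(x)\ge -(c_3/c_w)\,w(x)$ off $[-D,D]$ is exactly the content of the paper's Lemma~\ref{lm:negative-drift}, since $(\sA V)(x)=w(x)^{-1}[\cT V(x)-V(x)]\ge -\alpha$ is the same statement after dividing by $w(x)$. The routes to the contradiction differ at the level of bookkeeping. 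You work with the embedded jump chain, rewrite $\bbE_x[\tau_D]=\bbE_x[\sum_{i<\sigma}w(X_i)]$, smooth the drift function to vanish on $[-D,D]$, introduce a companion function $G\asymp g^2$ with bounded derivative to control uniform integrability, and invoke a separate drift estimate (with $V_0(x)=(1+|x|)^2$ and the fact $\gamma\beta>1$) to guarantee $\bbE_x[\sigma]<\infty$; you also deduce boundedness of hitting-time expectations from a drift-side converse in \citet{down1995exponential}. The paper instead stays in continuous time: it applies Dynkin's formula directly to the unsmoothed $V(x)=\log(1+|x|)$, uses monotone convergence and Fatou to pass to the limit $t\to\infty$, and obtains $\sup_x\bbE_x[\tau_D]<\infty$ from Theorem 16.2.2 of~\citet{meyn2012markov} applied to a skeleton chain. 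Your version trades the paper's shorter Dynkin/Fatou argument for a more explicit (and arguably more self-contained) uniform-integrability step; both are valid, and your closing remark about why $|g''|$ must decay no faster than $\pi^{1-\beta}$, which forces unbounded $g$ when $\gamma(1-\beta)\le 2$, is precisely the heuristic the paper gives before stating the theorem.
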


To prove Theorem~\ref{th:tail}, we use the following lemma.

\begin{lemma}\label{lm:negative-drift}
Consider the setting of Theorem~\ref{th:tail} and suppose $\beta \geq (\gamma - 2)/\gamma$. Then, for any $D \geq 2 \xi$, we have the drift condition 
\begin{equation}
     (\sA V)(x) \geq - \alpha, \quad \forall x \in (-\infty, D] \cup [D, \infty), 
\end{equation}
where  $V(x) = \log(1 + |x|)$, and 
\begin{equation}
    \alpha = \frac{(\gamma \beta - 1) (   \gamma \beta + 2) \xi^2  }{ 5 (\gamma - 1)  (1+D)^{2- \gamma(1 - \beta)}  } > 0. 
\end{equation} 
\end{lemma}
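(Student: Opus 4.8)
\textbf{Proof proposal for Lemma~\ref{lm:negative-drift}.}

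The plan is to compute $(\sA V)(x)$ directly for $V(x) = \log(1 + |x|)$ and show it is bounded below by $-\alpha$ everywhere outside a compact set, using the fact that $V$ is \emph{concave} on each half-line and that the proposal is truncated to $[-\xi,\xi]$. By symmetry it suffices to treat $x \geq D$, where $\pi$ is non-increasing so the acceptance probability in the ``downhill'' direction $y > x$ is $(\pi(y)/\pi(x))^\beta$ while the ``uphill'' direction $y < x$ is always accepted; the self-transition term contributes nothing to $V(y) - V(x)$. First I would write, just as in the proof of Theorem~\ref{th:drift-general},
\begin{equation}
\int_{\bbR}\!\left(V(y) - V(x)\right)\cT(x,\dd y) = \int_0^\xi \kappa(z)\left[\left(V(x-z) - V(x)\right) + \left(V(x+z) - V(x)\right)\frac{\pi(x+z)^\beta}{\pi(x)^\beta}\right]\dd z,
\end{equation}
and then I need a \emph{lower} bound on the bracketed quantity rather than an upper bound. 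Since $V(x+z) - V(x) \geq 0$ and the weight $(\pi(x+z)/\pi(x))^\beta \le 1$, dropping the weight only decreases the term, so the bracket is at least $V(x-z) + V(x+z) - 2V(x)$, which by concavity of $V$ on $[D-\xi,\infty)$ (here one should take $D \geq 2\xi$ so the whole window $[x-\xi,x+\xi]$ lies in the concave region) is at least $-z^2\sup_{|t-x|\le z}|V''(t)|$. With $V''(t) = -(1+t)^{-2}$ for $t>0$, and $x - \xi \geq \xi > 0$ on the relevant window, we get $|V''(t)| \leq (1 + x - \xi)^{-2} \leq (1+x)^{-2}\cdot(\text{const})$, so the integral is at least $-c_1\,\xi^3(1+x)^{-2}$ for an explicit constant $c_1$ depending on $\xi$. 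Dividing by $Z_\beta\,\pi(x)^{1-\beta}$ and using $\pi(x)^{1-\beta} = ((\gamma-1)/2)^{1-\beta}(1+x)^{-\gamma(1-\beta)}$ gives $(\sA V)(x) \geq -c_2\,(1+x)^{\gamma(1-\beta) - 2}$.

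The point is that when $\beta \geq (\gamma-2)/\gamma$ we have $\gamma(1-\beta) - 2 \leq 0$, so $(1+x)^{\gamma(1-\beta)-2}$ is bounded (decreasing or constant) on $[D,\infty)$, hence $(\sA V)(x) \geq -c_2\,(1+D)^{\gamma(1-\beta)-2} = -\alpha$ uniformly. It then remains to pin down the constants to match the stated $\alpha = (\gamma\beta-1)(\gamma\beta+2)\xi^2 / [5(\gamma-1)(1+D)^{2-\gamma(1-\beta)}]$: this is where one must be slightly careful, tracking $Z_\beta$, the normalization of $\kappa$ at $\xi$ (note $\kappa(\xi)/\!\int_0^\xi\kappa \geq$ some explicit fraction since $\kappa$ is the truncated Gaussian with sd $\xi$), and the exponent $\gamma\beta = \gamma - \gamma(1-\beta)$, which explains why $\gamma\beta - 1 \geq 0$ appears (it is exactly $1 - \gamma(1-\beta) + (\gamma - 2) \cdot$ rearranged, i.e. equivalent to $\beta \geq 1/\gamma$) and why $\gamma\beta + 2$ shows up from the $(1+x-\xi)^{-2}$ versus $(1+x)^{-2}$ comparison combined with the tail exponent. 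A clean way to handle the window-versus-center discrepancy uniformly in $x \geq D \geq 2\xi$ is to note $(1+x)/(1+x-\xi) \leq 2$, which converts all the messy ratios into a single numerical factor that gets absorbed into the ``$5$''.

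The main obstacle I anticipate is not the structure of the argument — it is essentially the concavity bound from Theorem~\ref{th:drift-general} run in reverse — but rather getting the explicit constant exactly right, in particular verifying that the combination of (a) the lower bound $\kappa(\xi) \geq (\text{fraction})\int_0^\xi \kappa(z)\,\dd z$, (b) $Z_\beta$ for the distribution~\eqref{eq:poly-distr}, and (c) the crude bound $|V''(t)| \leq (1+x-\xi)^{-2}$ on the window, all compose to something no worse than the claimed $\alpha$. A secondary point to check is that the bound must hold for \emph{all} $x$ with $|x| \geq D$ (the statement writes $(-\infty,D]\cup[D,\infty)$, which presumably means $|x|\geq D$), so one should confirm there is no issue near $x = D$ itself and that the self-transition probability $s(x)$, which only adds a nonnegative-weighted zero to the generator, genuinely drops out. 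Once $\alpha$ is established, Theorem~\ref{th:tail}'s necessity direction will follow by the standard fact that a drift inequality $\sA V \geq -\alpha$ with $V$ unbounded and $V \to \infty$ obstructs uniform ergodicity (the chain cannot return to any compact set with a uniformly bounded exponential moment), but that deduction lies outside this lemma.
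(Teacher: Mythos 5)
Your proposal contains a sign error at its central step, and the error is fatal to the approach. You write: "Since $V(x+z) - V(x) \geq 0$ and the weight $(\pi(x+z)/\pi(x))^\beta \le 1$, dropping the weight only decreases the term, so the bracket is at least $V(x-z) + V(x+z) - 2V(x)$." This is backwards. With $B \coloneqq V(x+z) - V(x) \geq 0$ and $w \coloneqq (\pi(x+z)/\pi(x))^\beta \leq 1$, you have $Bw \leq B$, hence
\begin{equation}
h(x,z) = \left[V(x-z) - V(x)\right] + B\,w \;\leq\; V(x-z) + V(x+z) - 2V(x),
\end{equation}
which is an \emph{upper} bound. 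The concave second-difference $V(x-z)+V(x+z)-2V(x) \geq -z^2\sup|V''|$ then lower-bounds something that is an upper bound on $h$, and so yields nothing about $h$ itself. Worse, if you try to rescue the argument with only the information $w \in [0,1]$, the worst case is $w = 0$, giving $h(x,z) \geq V(x-z) - V(x) \approx -z/(1+x)$, i.e.\ a first-order $O((1+x)^{-1})$ deficit rather than the needed $O((1+x)^{-2})$; dividing by $Z_\beta\pi(x)^{1-\beta}$ then gives $(\sA V)(x) \gtrsim -(1+x)^{\gamma(1-\beta)-1}$, which is unbounded for $\beta < (\gamma-1)/\gamma$, a range the lemma must cover.

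The missing idea is a \emph{cancellation of first-order terms}, and this requires a \emph{lower} bound on the acceptance ratio, not just the trivial $w \leq 1$. Writing $u = z/(1+x)$, one has $V(x-z)-V(x) = \log(1-u) \geq -u - u^2$, $V(x+z)-V(x) = \log(1+u) \geq u - u^2$, and, crucially, $w = (1+u)^{-\gamma\beta} \geq 1 - \gamma\beta u$. Multiplying out, the $-u$ from the left step and the $+u$ from the right step cancel at leading order, and the acceptance-ratio correction contributes an additional $-\gamma\beta u^2$, yielding
\begin{equation}
h(x,z) \;\geq\; -\bigl(\gamma\beta + 2\bigr)\,\frac{z^2}{(1+x)^2},
\end{equation}
which is exactly the $O((1+x)^{-2})$ rate you need. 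This is what the paper does, and it also explains where the factor $(\gamma\beta+2)$ in the stated $\alpha$ comes from (your guess that it arises from a window-versus-center $(1+x-\xi)^{-2}$ vs.\ $(1+x)^{-2}$ comparison is incorrect). The remaining constants are straightforward: $(\gamma\beta-1)/(\gamma-1)$ is $Z_\beta^{-1}\cdot(\text{normalizing constant of }\pi)$ for the density \eqref{eq:poly-distr}, and the $1/5$ comes from bounding $\kappa(z) \leq 0.6/\xi$ (not from a lower bound $\kappa(\xi) \geq \cdots$; since you want a lower bound on the whole expression and $h\leq 0$, you need an upper bound on $\kappa$) and then $\int_0^\xi z^2\,\dd z = \xi^3/3$, giving $0.6/3 = 1/5$.

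Your final remark about deducing the failure of uniform ergodicity from $\sA V \geq -\alpha$ with $V$ unbounded is correct in spirit and is indeed how the paper completes the proof of Theorem~\ref{th:tail}, but the lemma itself cannot be established along the lines you sketch.
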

\begin{proof}
    See Section~\ref{sec:lm-drift}. 
\end{proof}

\begin{proof}[Proof of Theorem~\ref{th:tail}]
Since the sufficiency follows from Proposition~\ref{coro:poly}, we only need to show that $\beta \geq (\gamma - 2) / \gamma$ is also necessary. 
We prove this by contradiction. 
Suppose that $\beta \geq (\gamma - 2) / \gamma$ and $(Y_t)_{t\geq 0}$ is uniformly ergodic. 
By applying Theorem 16.2.2 of~\citet{meyn2012markov} to the $T$-skeleton chain (i.e., the discrete-time chain $(\tilde{Y}_i)_{ i \geq 1}$ with $\tilde{Y}_i = Y_{iT}$), we see that  for any $D>0$, we must have $\sup_{x \in \bbR}\bbE_x [ \tau_D ] < \infty$ where $\tau_D$ is the hitting time of $(Y_t)_{t \geq 0}$ as defined in~\eqref{eq:def-tau}.

Fix any $D \geq 2 \xi$, and let $V(x) = \log(1 + |x|)$.  By Lemma~\ref{lm:negative-drift} and Dynkin's formula, 
\begin{align}
    \bbE_x \left[  V( X_{t \wedge \tau_D } ) \right] 
    &= V(x) + \bbE_x \left[  \int_0^{t \wedge \tau_D} (\sA V)(X_t) \dd t \right] \\
    & \geq V(x) - \alpha \,  \bbE_x \left[   t \wedge \tau_D \right], \label{eq:dynkin}
\end{align}
where $\alpha > 0$ is a constant depending on $\gamma, \beta, \xi, D$. 
By monotone convergence theorem, $\limsup_{t \rightarrow \infty} \bbE_x \left[   t \wedge \tau_D \right] = \bbE_x[ \tau_D ]$. 
Since we assume $\sup_{x \in \bbR}\bbE_x [ \tau_D ] < \infty$, we have $\tau_D < \infty$ a.s. under $\mathbb{P}_x$, and thus it follows from Fatou's lemma  that 
\begin{align*}
    & \limsup_{t \rightarrow \infty} \bbE_x \left[  V( X_{t \wedge \tau_D } ) \right]  =  \limsup_{t \rightarrow \infty} \bbE_x \left[  V( X_{t \wedge \tau_D } ) \ind(\tau_D < \infty) \right] \\
    &\leq \bbE_x \left[  \limsup_{t \rightarrow \infty} V( X_{t \wedge \tau_D } ) \ind(\tau_D < \infty) \right] = \bbE_x \left[    V( X_{ \tau_D } )  \right].  
\end{align*}
Hence, taking $\limsup$ on both sides of~\eqref{eq:dynkin}, we obtain that 
\begin{equation}
    \log(1 + D) \geq \bbE_x \left[    V( X_{ \tau_D } )  \right] \geq V(x) - \alpha \, \bbE_x[ \tau_D ]. 
\end{equation}
However, this yields  
\begin{equation}
    \sup_{x \geq D} \bbE_x[ \tau_D ] \geq \sup_{x \geq D} \frac{V(x) - \log(1 + D)}{\alpha} = \infty, 
\end{equation}
which contradicts our assumptions and thus completes the proof. 
\end{proof}

\begin{remark}\label{rmk:tail} 
It immediately follows from Theorem~\ref{th:tail} that  $(Y_t)_{t\geq 0}$ is uniformly ergodic for some $\beta$ if and only if $\gamma > 3$. 
For comparison,  the stereographic projection sampler considered by~\citet{yang2016computational} is uniformly ergodic if and only if $\gamma \geq 2$, by Theorem 2.1 therein. 
Any random walk Metropolis--Hastings algorithm cannot be uniformly ergodic or even geometrically ergodic for the target distribution given by~\eqref{eq:poly-distr}, assuming that the proposal density has finite variance~\citep{mengersen1996rates}. Indeed, \citet{jarner2007convergence} estimated that the random walk Metropolis--Hastings algorithm  converges to $\Pi$ in total variation distance at a polynomial rate of order $(\gamma - 1)/2$. 
\end{remark}

Uniform ergodicity implies some practical benefits of the importance-tempered Metropolis--Hastings algorithm. 
Since $\sup_{x \in \bbR} \bbE_x[ e^{\alpha \tau_D}] < \infty$ for some $\alpha > 0$, the time-average estimators of  $(Y_t)_{t\geq 0}$ are largely unaffected by the initialization. No matter how we initialize $(Y_t)_{t\geq 0}$, the time the chain spends outside of $[-D, D]$ is stochastically bounded, and by Remark~\ref{rmk:rate}, it can decay rapidly as $D$ grows. 
For the  estimator $\widetilde{\Pi}_{\beta, n}(f)$ defined in~\eqref{eq:def-snis-beta}, this means that the  total importance weight accumulated in low-probability regions is negligible once the chain has reached stationarity. 
Nevertheless, with a poor initialization, the variance of $\widetilde{\Pi}_{\beta, n}(f)$ may still decay slowly in $n$, since $n$ tracks the number of jumps made by $(Y_t)_{t \geq 0}$ and the number of jumps needed to enter $[-D, D]$ is unbounded.

\section{Numerical examples} \label{sec:sim}

\subsection{Uniform ergodicity with heavy-tailed targets}\label{sec:sim-uniform} 

In the first numerical example, we let $\Pi$ be the one-dimensional $t$-distribution with degrees of freedom equal to $4$. Hence, $\pi(x)$ decays at rate $|x|^{-5}$, and by Theorem~\ref{th:tail}, $(Y_t)_{t \geq 0}$ is uniformly ergodic if and only if $ 0.2 < \beta < 0.6$. 

We simulate the continuous-time Markov chain $(Y_t)_{t \geq 0}$  with generator given by~\eqref{eq:def-gen-A} and the proposal density $\kappa$ being $\mathrm{N}(0, 3^2)$. 
For each choice of $\beta$, we simulate $10^4$ trajectories, and for each fixed $t$, we compare the distribution of $Y_t$ with $\Pi$ by computing the Kolmogorov--Smirnov (KS) test statistic. In Figure~\ref{fig1}, we visualize how the KS statistic decreases with $t$ for  $\beta = 0.55, 0.65$ and $Y_0 = 0, 20, 100, 500$. 
When $\beta = 0.55$, we observe that the KS statistic quickly decreases to the same level across all four initialization schemes (that is, the four curves in the plot quickly intersect.) This is due to the uniform ergodicity of $(Y_t)_{t \geq 0}$, which ensures that the time needed to reach stationarity is uniformly bounded over all initial states. In comparison, when $\beta = 0.65$, the chain is not uniformly ergodic, and the plot also suggests that the time required for the KS statistic to reach a certain level keeps growing with $Y_0$.

\begin{figure}[!h] 
\centering  
\includegraphics[width=0.95\linewidth]{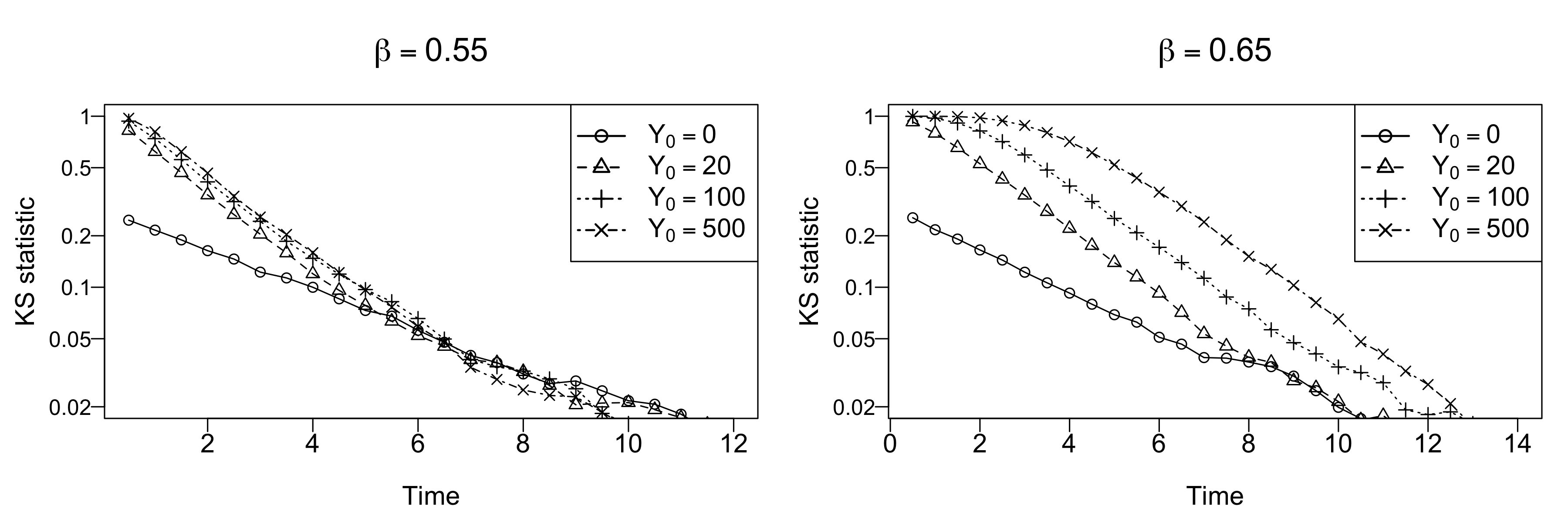}
\caption{Simulation of the continuous-time Markov chain $(Y_t)_{t \geq 0}$ with target distribution being $t_4$.  The Kolmogorov--Smirnov test statistic compares $t_4$ with the distribution of $Y_t$ over $10^4$ replicates.  
}
\label{fig1}
\end{figure}

\begin{figure}[!h] 
\centering 
\includegraphics[width=0.95\linewidth]{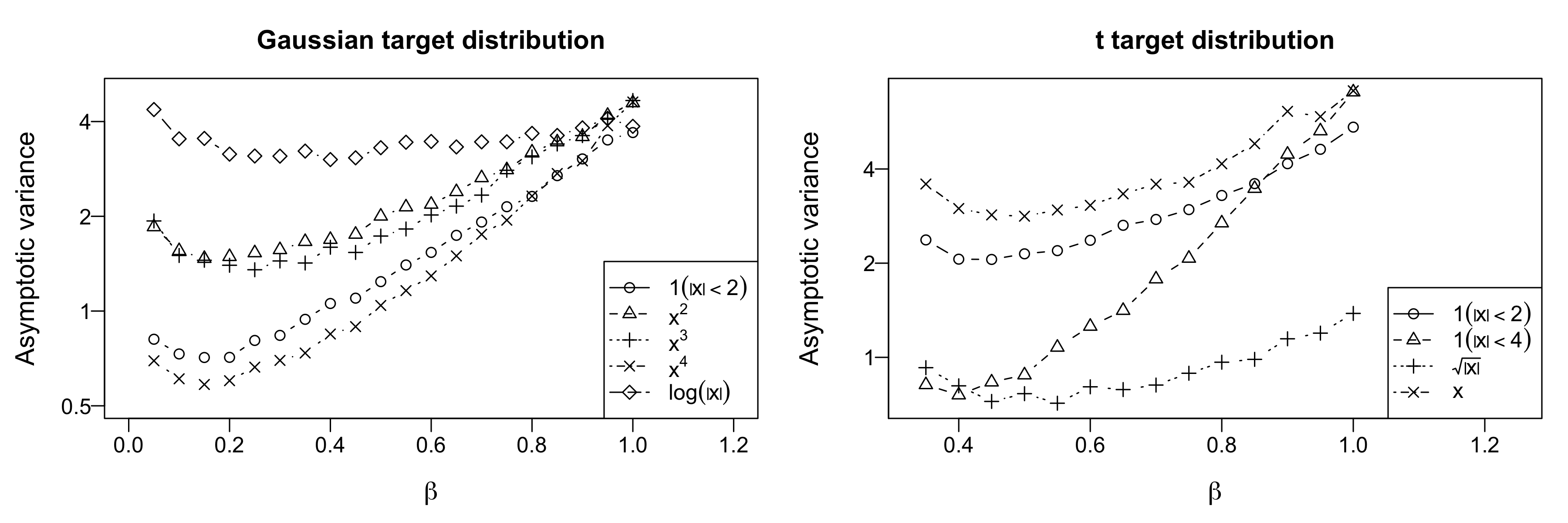}
\caption{Simulation of the importance-tempered Metropolis--Hastings algorithm with initial value $X_0 \approx 0$ and $n$ iterations; $n = 1,000$ for the Gaussian target, and $n = 2,000$ for the $t$-distribution. 
Asymptotic variance is estimated over $2,000$ replicates and scaled by $\sigma^2(\Pi, f)$.  }
\label{fig2}
\end{figure}

\subsection{Efficiency of importance-tempered Metropolis--Hastings algorithms}\label{sec:sim-itmh}

In the second simulation study, we numerically compute the variance of the self-normalized importance sampling estimator $\widetilde{\Pi}_{\beta, n}(f)$ defined in~\eqref{eq:def-snis-beta}. We let $\Pi$ be the standard normal distribution and aim to estimate $\Pi(f)$ for $f(x) = \ind_{ [-2, 2] }(x), x^2, x^3, x^4$ and $\log |x|$. 
Recall that for independent importance sampling, the asymptotic variance of the estimator $\widetilde{\Pi}_{\beta, n}(f)$ can be computed by~\eqref{eq:var-snis}, which has a closed-form expression for the first four choices of $f$. 
Letting  $\sigma^2(\beta, f)  = \sigma^2(Q, f)$ with $q(x) \propto \pi(x)^\beta$, we  report the values of $\sigma^2(\beta, f) / \sigma^2(\Pi, f)$ in the first two columns of Table~\ref{tab1}. 
Clearly, the asymptotic variances of the first four functions are  significantly reduced by importance sampling, and $\log |x|$ is the only one for which direct sampling from $\Pi$ provides an advantage. 
This is consistent with the theory developed in Section~\ref{sec:main}: $\log |x|$  diverges as $|x| \downarrow 0$, while the other four functions has less variation near  $0$ than in the tails. 

Next, we simulate the importance-tempered Metropolis--Hastings algorithms for $n = 1,000$ iterations and compute the estimator $\widetilde{\Pi}_{\beta, n}(f)$; no burn-in samples are discarded. When $\beta = 1$, this reduces to the usual random walk Metropolis--Hastings algorithm.  We let $\kappa$ be $\mathrm{N}(0, 2^2)$ and initialize the algorithm at either $X_0 = 10$ or $X_0 = 0.01$ (we avoid $0$ since we need to estimate $\Pi(f)$ with $f(x) = \log|x|$.) 
For each setting, we simulate $N_{\rm{rep}} = 2,000$ trajectories and estimate the asymptotic variance of $\widetilde{\Pi}_{\beta, n}(f)$ by 
\begin{equation}
    \hat{\sigma}^2(\beta, f) =  \frac{n }{N_{\mathrm{rep}}} \sum\nolimits_{k=1}^{N_{\mathrm{rep}}} \left[ \widetilde{\Pi}_{\beta, n}^{(k)}(f) - \Pi(f) \right]^2,  
\end{equation}
where $\widetilde{\Pi}_{\beta, n}^{(k)}(f) $ denotes the estimator obtained for the $k$-th trajectory. 
The left panel of Figure~\ref{fig2} shows the ratio $\hat{\sigma}^2(\beta, f)/ \sigma^2(\Pi, f)$ for the initialization $X_0 = 0.01$. 
It can be seen that as long as $\beta$ is not too small, importance tempering reduces the asymptotic variance for all five functions, even including $f(x) = \log |x|$.  
For $f(x) = \ind_{ [-2, 2] }(x)$ and $f(x) = x^4$, the reduction in variance is quite substantial,  and for optimally chosen $\beta$, the asymptotic variance of $\widetilde{\Pi}_{\beta, n}(f)$ is even smaller than  $\sigma^2(\Pi, f)$. 
For the initialization $X = 10$, we report the ratio $\hat{\sigma}^2(\beta, f)/ \sigma^2(\Pi, f)$ for $\beta = 1$ and $\beta = 0.7$ in Table~\ref{tab1}. 
The variance of the random walk Metropolis--Hastings  estimator (i.e., $\beta = 1$) increases dramatically, while for $\beta = 0.7$, the estimator performs essentially the same as with $X_0 \approx 0$.  Indeed, we have observed that for all $\beta \leq 0.8$, the choice of initialization has a negligible effect on the asymptotic variance of $\widetilde{\Pi}_{\beta, n}(f)$ (results not shown here), which aligns well with our uniform ergodicity result developed in Section~\ref{sec:mc}. 

We then repeat the same simulation with $\Pi$ set to the $t$-distribution with $4$ degrees of freedom and test functions $f(x) = \ind_{ [-2, 2] }(x), \ind_{ [-4, 4] }(x), \sqrt{|x|}$ and $x$ (we do not consider $f(x)=x^2$ since $t_4$ has no finite fourth moment.) The result for $X_0 = 0$ is shown in the right panel of Figure~\ref{fig2}, where the trends are very similar to the Gaussian case. When initialized at $X_0 = 10$, we have observed that $\widetilde{\Pi}_{\beta, n}(f)$ with $\beta \leq 0.8$ remains essentially unchanged, while the performance of the random walk Metropolis--Hastings algorithm is noticeably worse but not as significantly as in the Gaussian case.  

\begin{table} 
\centering 
    \begin{tabular}{crrrrrr}
		\hline
		 &   \multicolumn{2}{c}{IS} & \multicolumn{2}{c}{RWMH} &  \multicolumn{2}{c}{ITMH with $\beta = 0.7$}  \\
		\cmidrule(lr){2-3} \cmidrule(lr){4-5}  \cmidrule(lr){6-7}   
		function & $\beta=0.4$ & $\beta=0.7$ & $X_0 \approx 0$ & $X_0 = 10$ & $X_0 \approx 0$ & $X_0 = 10$   \\ 
	 	\hline
        $f(x) = \ind_{ [-2, 2] }(x)$ & $0.358$ & $0.546$ & $3.7$ & $6.5$  & $1.9$ & $2.0$ \\ 
		$f(x) = x^2$ & $0.576$ & $0.648$ & $4.6$ & $151$ &  $2.7$ & $2.8$ \\ 
		$f(x) = x^3$ & $0.305$ & $0.477$ & $4.7$ & $1351$ & $2.3$ & $2.5$ \\ 
		$f(x) = x^4$ & $0.234$ & $0.383$ & $4.6$ & $1.6 \times 10^4$ & $1.8$ & $2.1$ \\ 
        $f(x) = \log |x|$ & $1.31$ & $1.06$ & $3.9$ & $4.4$ & $3.4$ & $3.4$ \\ 
		\hline
	\end{tabular}  
    \caption{
    Asymptotic variance (scaled) for the Gaussian target distribution. 
	IS: independent important sampling with trial $q(x) \propto \pi(x)^\beta$. 
    RWMH/ITMH: random walk and importance-tempered Metropolis--Hastings with initial value $X_0$ and $1,000$ iterations; asymptotic variance is estimated over $2,000$ replicates. 
    All values are scaled by $\sigma^2(\Pi, f)$. 
	}
	\label{tab1}
\end{table}

\section{Concluding remarks}\label{sec:conclude}
The primary contribution of this work is theoretical, and the numerical examples presented in Section~\ref{sec:sim} serve only to illustrate the insights obtained from our theory.  
When importance-tempered MCMC is used for complex problems, selecting an appropriate value of $\beta$ may pose a significant challenge.  
One popular strategy in the importance sampling literature  is to use adaptive schemes that iteratively update the trial distribution \citep{bugallo2017adaptive}. 
Another approach is to adopt the framework proposed by~\cite{gramacy2010importance}, where one runs simulated or parallel tempering  at a sequence of temperatures and aims to optimize the weights across temperature levels when computing the importance sampling estimator. This strategy could be particularly useful when $\Pi$ is multimodal. 
On the theoretical side, spectral gap bounds for the continuous-time Markov chain $(Y_t)_{t \geq 0}$ considered in Section~\ref{sec:mc} need to be developed. Establishing these bounds would directly yield estimates of the asymptotic variance of $\widetilde{\Pi}_{\beta, n}(f)$. 
For locally-balanced processes, such bounds have already been obtained under certain settings~\citep{zhou2022rapid, livingstone2025foundations}.

\section*{Acknowledgement}
This work was supported by the National Science Foundation through grants DMS-2245591 and DMS-2311307.

\appendix  
\section{Appendix: Proofs for lemmas and propositions}
\subsection{Proof of Lemma~\ref{lm:loss-L2}}\label{sec:proof1} 
\begin{proof}  
Let $M = \mathrm{ess\,sup}_x \, w(x)$. For any $f$ with   $\Pi(f) = 0$, we have 
$$\sigma^2(Q, f) = \Pi(f^2 w) \leq M \, \Pi(f^2) = M \sigma^2(\Pi, f).$$ This proves that we always have $R(Q,  \cL^2(\Pi)) \leq  M $. 
 
It remains to show that $R(Q,  \cL^2(\Pi)) \geq  M $. 
Assume  $M < \infty$ first.  
Fix some $\epsilon \in (0, M]$ and define $B = \{ x  \in \cX \colon w(x) \geq M - \epsilon  \}.$  We have $\Pi(B) > 0$ since $M$ is the essential supremum of $w$. 
By a classical result for atomless probability measures from~\cite{sierpinski1922fonctions},  there exists a measurable set $B_0 \subset B$ such that $\Pi(B_0) = \Pi(B) / 2$. 
Let $B_1 = B \setminus B_0$ and define a test function $f$ by 
$$f(x) = \ind_{B_0} (x) - \ind_{B_1}(x).$$   
Clearly, $f \in \cL^2(\Pi)$ and $\Pi(f) = 0$. Hence, 
\begin{align*}
    R(Q,  \cL^2(\Pi)) \geq  \frac{\sigma^2(Q, f)}{\sigma^2(\Pi, f)} = \frac{  \Pi(f^2 w) }{  \Pi(f^2) }  = \frac{ \int_B   w \, \dd \Pi }{ \Pi(B)}
    \geq M - \epsilon.
\end{align*}
Taking supremum on both sides over all $\epsilon > 0$,  we get $ R(Q,  \cL^2(\Pi)) \geq M$. 
When $M = \infty$, by an analogous argument, we have $ R(Q,  \cL^2(\Pi)) \geq C$ for any $C < \infty$, which implies that  $ R(Q,  \cL^2(\Pi)) = \infty$. 
\end{proof}

\subsection{Proof of Lemma~\ref{lm:worst-case} } \label{sec:proof2}

\begin{proof} 
For notational simplicity,  we write $\pi_i = \pi(x_i)$, $q_i = q(x_i)$ and $f_i = f(x_i)$. 
By equation~\eqref{eq:default-loss}, we only need to find the function $f$ that maximizes $\Pi(f^2 w)$ under the constraint $\Pi(f) = 0$ and $\Pi(f^2) = 1$. 
If $w_i = \infty$ for some $i$, we have $\pi_i > q_i = 0$ (recall that we assume $\pi > 0$ everywhere), and thus $\Pi(f^2 w ) = \infty$ whenever $f_i \neq 0$.   So henceforth we assume $q_i > 0$ for every $i$. 
Using Lagrange multipliers, we introduce the loss function 
\begin{equation}\label{eq:lagrange}
   J(Q,  \lambda, \eta) = \Pi\left(f^2  w \right)  - \lambda \left(\Pi(f^2) - 1 \right) - \eta \Pi(f). 
\end{equation}
Writing $f_i = f(x_i)$, we get $\dd J / \dd f_i =  2 f_i \pi_i w_i  - 2 \lambda f_i \pi_i  - \eta \pi_i.$ 
Setting all derivatives of $J$ to zero, we find that the optimal $(f, \lambda, \eta)$ should satisfy   
\begin{align}
 ( w_i - \lambda) f_i   &=   \frac{\eta}{2}, \quad \forall \, 1 \leq i \leq N,  \label{eq:lambda1}  \\
 \Pi(f) & = 0,  \label{eq:f1} \\
 \Pi(f^2) & = 1. \label{eq:f2}
\end{align}
We now split the proof into two cases: (1) $\lambda = w_i$ for some $i$, and (2) $\lambda \neq w_i$ for any $i$. 

\medskip 

\noindent\textit{Case 1: $\lambda = w_i$ for some $i$.}
To satisfy~\eqref{eq:lambda1}, we need $\eta = 0$  and $f_j = 0$ whenever $w_j \neq w_i$. Since we do not allow $f$ to be the zero function and $f$ should satisfy $\Pi(f) = 0$, there must exist some $j \neq i$ such that $w_i = w_j$. 
Now choose $f_i, f_j$ (assuming all other coordinates of $f$ are zero) such that~\eqref{eq:f1} and~\eqref{eq:f2} are satisfied. Then $f$ must satisfy 
\begin{equation}\label{eq:case1}
    \Pi(f^2 w) = \sum\nolimits_{j=1}^N f_j^2 w_j = w_i (f_i^2 + f_j^2)  = w_i. 
\end{equation} 
In particular, when $w_{(1)} = w_{(2)}$, there exists a solution $(f, \lambda, \eta)$ to~\eqref{eq:lambda1}, ~\eqref{eq:f1} and~\eqref{eq:f2} with $\lambda = w_{(1)}$. 
The function $f$ also satisfies $\Pi(f^2 w) = w_{(1)}$, which implies that  $R(Q,  \cL^2(\Pi)) \geq w_{(1)}$. 
Since we always have $ R(Q,  \cL^2(\Pi)) \leq w_{(1)}$, we conclude that $R(Q,  \cL^2(\Pi)) = w_{(1)}$.

\medskip 
\noindent\textit{Case 2: $\lambda \neq w_i$ for any $i$.} 
By~\eqref{eq:lambda1}, the optimal $f$ takes the form 
\begin{equation}\label{eq:opt-f}
    f_i = \frac{ \eta / 2}{ w_i - \lambda}, \quad \forall \, 1 \leq i \leq n. 
\end{equation}
We find using the constraint~\eqref{eq:f1} that $\lambda$ must solve equation~\eqref{eq:lambda}, and given $\lambda$, $\eta$ can be determined by the constraint~\eqref{eq:f2}.      
Assuming $w_{(1)} > w_{(2)}$, on the open interval $(w_{(2)}, w_{(1)})$, the left-hand side of~\eqref{eq:lambda} is strictly monotone increasing in $\lambda$ with limits
\begin{equation}
  \lim_{\lambda \downarrow w_{(2)}}   \sum_{i=1}^N \frac{\pi_i }{w_i - \lambda} = -\infty, \quad   \lim_{\lambda \uparrow w_{(1)}}   \sum_{i=1}^N \frac{\pi_i }{w_i - \lambda} = \infty. 
\end{equation} 
Hence, on $(w_{(2)}, w_{(1)})$, there is a unique solution to equation~\eqref{eq:lambda}. 
The same argument shows that there is one solution to~\eqref{eq:lambda} on each interval $(w_{(k-1)}, w_{(k)})$ with $w_{(k)} > w_{(k-1)}$. 
To figure out which solution is the optimizer,  we multiply both sides of~\eqref{eq:lambda1} by $\pi_i f_i$, sum over $i$, and use~\eqref{eq:f1} and~\eqref{eq:f2}  
to get $\lambda = \Pi(f^2 w)$.  
Therefore, to maximize $\Pi (f^2 w)$, we should pick the largest solution to~\eqref{eq:lambda}, which is the one on $(w_{(2)}, w_{(1)})$.

\medskip 
To conclude, if $w_{(1)} = w_{(2)}$, we have already proved in Case 1  that $R(Q,  \cL^2(\Pi))  = w_{(1)}$. 
If $w_{(1)} > w_{(2)}$ and we set $\lambda = w_i$ for some $i$, by~\eqref{eq:case1}, any function $f$ solving~\eqref{eq:lambda1},~\eqref{eq:f1} and~\eqref{eq:f2} satisfies $\Pi(f^2 w) = w_i \leq w_{(2)}$. Hence, when $w_{(1)} > w_{(2)}$,  the function $f$ that  maximizes $\Pi(f^2 w)$ under the constraints~\eqref{eq:f1} and~\eqref{eq:f2}  is the one given by~\eqref{eq:opt-f}, which satisfies $\Pi(f^2 w)  = \lambda$ with $\lambda \in (w_{(2)}, w_{(1)}) $ solving~\eqref{eq:lambda}.  
\end{proof}
  
\subsection{Proof of Lemma~\ref{lm:two}}\label{sec:proof3}
\begin{proof}
It is easy to verify that $\Pi(f) = 0$ and $\Pi(f^2) = 1$. Hence, $\sigma^2 (Q, f) = \Pi(f^2 w)$.  
By the Cauchy--Schwarz inequality,
\begin{align*} 
\Pi(w \ind_E) Q(E) =  \left(  \int_{E}  \frac{\pi(x)^2 }{q(x)}  \mu(\dd x)  \right)
  \left( \int_{E}  q(x) \mu (\dd x) \right) \geq  \left( \int_{E}  \pi(x)  \mu(\dd x)  \right)^2 = p^2. 
\end{align*}
It follows that $\Pi(w \ind_E) \geq p^2 / Q(E)$, and similarly, 
$\Pi(w \ind_{E^c}) \geq (1 - p)^2 / Q(E^c)$. Hence, 
\begin{equation}
   \frac{ 1 - p}{ p } \,  \Pi(w \ind_E) +  \frac{ p }{ 1 - p} \,  \Pi(w \ind_{E^c})
    \geq \frac{  p ( 1 - p ) }{Q(E) ( 1 - Q(E) )} \geq 4 p ( 1 - p ), 
\end{equation}
since $u(1 - u) \leq 1/4$ for any $u \in [0, 1]$.
\end{proof}

\subsection{Proof of Proposition~\ref{th:multi}}\label{sec:proof-multi}
\begin{proof} 
Let $a_X =  t^2 / \delta, a_Y = (1-t)^2 / (1 - \delta)$.  
The large atom case is easy. We  apply Lemma~\ref{lm:two} with $E = \{x^*\}$ to obtain that  $\sigma^2(Q_X, f) \geq 4 p(1 - p)$ and  $\sigma^2(Q_Y, f) \geq 4 p(1 - p)$ for the test function $f$ given in Lemma~\ref{lm:two} (which does not depend on the trial distribution). 
Hence, $V(f) \geq a_X + a_Y$, and by a routine application of the Cauchy--Schwarz inequality, we have $a_X + a_Y \geq 1$ with equality attained by $\delta = t$. 

For the case where no large atom exists, we need to find a measurable function $f$ such that $V(f) \geq 1$. 
To this end, define the ``average'' importance weight by 
    \begin{equation}
        \bar{w}(x) = a_X w_X(x) + a_Y w_Y(x), 
    \end{equation} 
    which yields 
    \begin{equation}\label{eq:V2}
        V(f) = a_X \Pi (f^2 w_X) + a_Y \Pi(f^2 w_Y) = \Pi( f^2 \bar{w} ).   
    \end{equation}
    Define the ``average'' trial density by 
    \begin{equation}
        \bar{q}(x) = \frac{ a_X q_X(x) + a_Y q_Y(x) }{ a_X + a_Y }.  
    \end{equation}
    It is easy to check that $\int \bar{q} \dd \mu = 1$, which verifies that  $\bar{q}$ is indeed a probability density function with respect to $\mu$. Observe that 
    \begin{align*}
        \int_{\cX} \bar{q}(x) \bar{w}(x) \mu (\dd x) 
        &=  \frac{1}{a_X + a_Y}  \int_{\cX} \pi(x) \left\{ a_X^2 + a_Y^2 + a_X a_Y \frac{q_Y(x)}{q_X(x)} + a_X a_Y \frac{q_X(x)}{q_Y(x)}   \right\} \mu (\dd x)   \\
        & \overset{(i)}{\geq} a_X + a_Y \geq 1,
    \end{align*}
    where in step (i) we use  $u + u^{-1} \geq 2$ for any $u > 0$ and $\int \pi \dd \mu = 1$.  Note that (i) becomes an equality when $q_X = q_Y$. Since $\bar{q}$ is a probability density function, we have $ \mathrm{ess\,sup}_x \, \bar{w}(x) \geq 1$. 
    
    The remainder of the argument is similar to the proof of Theorem~\ref{th:opt-Q-no-atom}. First, let  $\cX_0$ denote the non-atomic part of $\cX$ (assuming that it is not empty).   
    We can construct a sequence of functions $f_k$ such that $\Pi(f_k)=0, \Pi(f_k^2) = 1$, and the infimum of $\bar{w}$ on the support of $f_k$ increases to $\mathrm{ess\,sup}_{x \in \cX_0} \bar{w}(x)$ (see the proof of Lemma~\ref{lm:loss-L2} for details.) By~\eqref{eq:V2}, we have $\sup_{k} V(f_k) \geq \mathrm{ess\,sup}_{x \in \cX_0} \bar{w}(x)$. Hence, we only need to prove the case where there is an atom $x^*$ such that $w(x^*) =  \mathrm{ess\,sup}_x \, \bar{w}(x) \geq 1$. Let $f$ be the test function constructed in Lemma~\ref{lm:two} with $E = \{x^*\}$. Repeating the argument for Theorem~\ref{th:opt-Q-no-atom}, we get $\Pi(f^2 \bar{w}) \geq \Pi(\bar{w})$. By the Cauchy--Schwarz inequality,
    $\Pi(\bar{w})  \geq a_X + a_Y \geq 1,$ 
    which concludes the proof. 
\end{proof}

\subsection{Proof of Proposition~\ref{coro:lip}}\label{sec:proof-coro-lip}
\begin{proof}
   The lower bound given in~\eqref{eq:lower}  follows from Lemma~\ref{lm:two} with $E = A$, which is still applicable since the test function $f$ in Lemma~\ref{lm:two} is constant on $E$. 

   To prove the upper bound on $R(Q^*, \cF(A, r)) $, fix some $f \in \cF(A, r)$. Following the proof argument for Theorem~\ref{th:discrete}, we get 
   \begin{equation}\label{eq:tmp2}
       (1 - p) \left[ 1 - \Pi(f^2 \ind_A) \right] \geq \Pi(f \ind_A)^2. 
   \end{equation} 
   Let $\Pi|_A$ denote the probability measure on $A$ with  $\Pi|_A(B) = \Pi(B)/\Pi(A)$ for each measurable $B \subset A$. 
   Since $\mathrm{osc}(f, A) \leq r$,  the variance of $f|_A$ with respect to $\Pi|_A$ is at most $r^2 / 4$.  
   Thus, we obtain that 
   \begin{equation}\label{eq:tmp3}
       \Pi( f \ind_A)^2 \geq p \, \Pi(f^2 \ind_A) -  p^2 r^2 / 4,  
   \end{equation}
   which, combined with~\eqref{eq:tmp2}, yields that 
   \begin{equation}\label{eq:tmp4}
       \Pi(f^2 \ind_A) \leq 1 - p + p^2 r^2 / 4. 
   \end{equation} 
   Finally,  letting $w^* = \pi / q^* $ and using $\Pi(f^2) = 1$, we get 
   \begin{align*}
       \Pi(f^2 w^*) = 2p \,\Pi(f^2 \ind_A) + 2(1 - p) \Pi(f^2 \ind_{A^c})
       =  2 (2 p - 1)  \Pi(f^2 \ind_A)  + 2(1 - p). 
   \end{align*} 
   Plugging in the bound from~\eqref{eq:tmp4}  proves the claim. 
\end{proof}

\subsection{Proof of Proposition~\ref{coro:exp}}\label{sec:proof-exp}
\begin{proof}
We first show that  $ (\sA V)(x)  \leq - \alpha V(x)$  on $ [D, \infty)$,
where $D, \alpha$ are as specified in the theorem and $V(x) = 2 - e^{-|x| / \xi}$. 
On $(0, \infty)$, we have 
\begin{equation}
    V''(x) = - \frac{1}{\xi^2} e^{- x / \xi}, 
\end{equation}
and thus $V$ is concave. The function $\phi$ defined in~\eqref{eq:alpha-theorem} can be expressed by $\phi(x) =\xi^{-2} e^{- (x + \xi)/ \xi} $.  
Using~\eqref{eq:kappa} and numerically computing the cumulative distribution function of the normal distribution, we get 
\begin{equation}
    \kappa(\xi) \geq  \frac{ 1.4 \, e^{-1/2} }{ \sqrt{2 \pi \xi^2} }. 
\end{equation} 
Hence, by Theorem~\ref{th:drift-general},  we have $ (\sA V)(x)  \leq - \alpha V(x)$  on $ [D, \infty)$ if 
\begin{equation}\label{eq:drift-inf}
   \inf_{x \colon x \geq D} \frac{  1.4 \, e^{-3/2}  e^{-|x| / \xi} } { 6 \sqrt{2 \pi} \, Z_\beta \, \pi(x)^{1 - \beta}} 
   \geq \alpha =  \frac{  \beta^{1/\omega} }{49}     \exp\left[   a  (1 - \beta)D^\omega - \xi^{-1} D  \right]. 
\end{equation}   
Using $ \int_0^\infty  \exp\left( -b x^\omega\right)  \dd x 
    =  \omega^{-1} b^{-1/\omega} \Gamma (\omega^{-1})$, we find that 
\begin{align*}
   \frac{1}{Z_\beta \, \pi(x)^{1 - \beta}} =  \frac{\pi(x)^{\beta} }{ Z_\beta \, \pi(x) }  = \beta^{1/\omega} \exp\left[  a  (1 - \beta)|x|^\omega\right]. 
\end{align*} 
A routine calculation shows that the mapping $x \mapsto    a  (1 - \beta) |x|^\omega - |x| / \xi$  is minimized when 
\begin{equation}
    |x| = \left[ \frac{1}{a \xi  \omega (1 - \beta)} \right]^{  1/(\omega - 1)} \leq D. 
\end{equation}
Hence, the infimum in~\eqref{eq:drift-inf} is attained at $x = D$, and one can easily check that the inequality in~\eqref{eq:drift-inf} holds. 
Applying the same argument on $(-\infty, -D]$ yields that 
\begin{equation}
   (\sA V)(x)  \leq - \alpha V(x), \quad \forall \, x \in (-\infty, D] \cup [D, \infty). 
\end{equation}
 
Since $V$ is bounded, the uniform ergodicity follows from~\citet[Th. 5.2]{down1995exponential} once we can verify that the set $[-D, D]$ is petite for $(Y_t)_{t \geq 0}$. 
To prove this, we note that (1) the average holding time at any $x \in [-D, D]$ is bounded from above by $ Z_\beta  \pi(0)^{ 1-\beta} < \infty$ and from below by $ Z_\beta  \pi(D)^{ 1-\beta} > 0$, 
and (2) the set $[-D, D]$ is small for the discrete-time chain $\cT$. 
Together, they imply that the set $[-D, D]$ is also small for  $(Y_t)_{t \geq 0}$.  
The bound on $\bbE_x[ e^{\alpha \tau_D} ]$  follows from~\citet[Th. 6.1]{down1995exponential} and the fact $V \leq 2$. 
\end{proof}

\subsection{Proof of Proposition~\ref{coro:poly}} \label{sec:proof-poly}

\begin{proof} 
We first check that $\pi(x)^\beta$ is a proper un-normalized density. Since $\beta < (\gamma - 2) / \gamma < 1$, we have $\pi(x)^\beta \leq 1 \vee \pi(x)$, which implies that $\pi^\beta$ is integrable over any compact set. Condition~\eqref{eq:power-law} implies that there exists $C  < \infty$ such that $\pi(x) \leq C  |x|^{- \gamma }$ for all $x \geq C$. Using $\beta > 1/\gamma$, we find that $\int_C^\infty \pi(x)^\beta \dd x    < \infty$. This proves $Z_\beta = \int \pi(x)^\beta \dd x < \infty$.

Now we establish a drift condition $ (\sA V)(x)  \leq - \alpha V(x)$ on $(-\infty, D] \cup [D, \infty)$ for any $D \geq 2 \xi$. 
We let $\nu = \gamma (1 - \beta) - 2$ and define the drift function by 
\begin{equation}
    V(x) = 1 + \xi^{-\nu} -  (|x| \vee \xi)^{-\nu}. 
\end{equation}
Note that $\nu > 0$ since $\beta < (\gamma - 2) / \gamma$, which implies that $V$ is  concave on $[\xi, \infty)$.    
Applying Theorem~\ref{th:drift-general}  with $\phi(x) = \nu(\nu + 1) / (x + \xi)^{2 + \nu}$ and using $Z_\beta < \infty$ yields  
\begin{equation} 
    \alpha = \inf_{x \colon |x| \geq D} \frac{\tilde{C}  \nu(\nu + 1)  \xi^2  } {  (1 + \xi^{-\nu} ) (|x| + \xi)^{2 + \nu} \,  \pi(x)^{1 - \beta}},
\end{equation}  
for some fixed constant $\tilde{C} < \infty$. 
Since $x^{2 + \nu}  \pi(x)^{1 - \beta} = [x^{\gamma} \pi(x)]^{1 - \beta}$, it follows from condition~\eqref{eq:power-law} that  $\limsup_{x \rightarrow \infty} x^{2 + \nu}  \pi(x)^{1 - \beta} < \infty$. Hence, $\alpha > 0$, and the uniform ergodicity and hitting time bound follow by the same argument for Proposition~\ref{coro:exp}.  
\end{proof}

\subsection{Proof of Lemma~\ref{lm:negative-drift} } \label{sec:lm-drift}

\begin{proof}
    One can verify by differentiation that $\log(1 + u) \geq u - u^2$ for $u \geq -1/2$. Hence, for $x > 0$ and $|z| \leq (1 + x)/ 2$, 
    \begin{equation}\label{eq:hit1}
        V(x + z) - V(x) = \log \left(1 + \frac{z}{1 + x} \right) \geq \frac{z}{ 1+x } - \frac{  z^2}{(1+x)^2  }. 
    \end{equation}
    Meanwhile, for $x \geq z \geq 0$,  
    \begin{equation}\label{eq:hit2}
                \frac{\pi(x+z)^\beta}{\pi(x)^\beta}
                = \left( 1 + \frac{ z }{1 + x}\right)^{- \gamma \beta} \geq 1 - \frac{\gamma \beta z}{1 + x}. 
    \end{equation}
    Similar to the proof of Theorem~\ref{th:drift-general}, we use $\kappa(z) = 0$ for $|z| > \xi$ to get 
    \begin{equation} 
     \int_{\bbR} \left[ V(y) - V(x) \right] \cT(x, \dd y) =  \int_0^\xi   \kappa(z)   h(x, z)   \dd z,   
    \end{equation}
    where
    \begin{equation}
        h(x, z) = \left[ V(x - z) - V(x)  \right] + \left[ V(x + z) - V(x)  \right]  \frac{\pi(x+z)^\beta}{\pi(x)^\beta}. 
    \end{equation}
    It follows from~\eqref{eq:hit1} and~\eqref{eq:hit2} that for $ x \geq D \geq 2 \xi > 0$ and $0 \leq z \leq \xi$, 
    \begin{equation}
        h(x, z) \geq  - \frac{(  \gamma \beta + 2) z^2}{(1+x)^2  }. 
    \end{equation}
By~\eqref{eq:kappa}, $\kappa(z) \leq 0.6 / \xi$ for any $z$, which yields
\begin{equation}
    \int_{\bbR} \left[ V(y) - V(x) \right] \cT(x, \dd y) 
    \geq -   \frac{ 0.6 (\gamma \beta + 2) }{(1+x)^2  \xi } \int_0^\xi z^2 \dd z =  -   \frac{  (\gamma \beta + 2) \xi^2 }{5 (1+x)^2 }, 
    \quad \forall \, x \geq D. 
\end{equation} 

Using the definition of $\sA$, we find that for $|x| \geq D$, 
\begin{equation}
    (\sA V)(x) \geq  -   \frac{  (\gamma \beta + 2) \xi^2 }{5 (1+|x|)^2 \, Z_\beta \, \pi(x)^{1 - \beta} } 
    = - \frac{ C(\gamma, \beta, \xi) }{ (1+|x|)^{2- \gamma(1 - \beta)}  },  
\end{equation}
where the constant $C(\gamma, \beta, \xi)$ is given by 
\begin{equation}
    C(\gamma, \beta, \xi) = \frac{ (\gamma \beta - 1) (   \gamma \beta + 2) \xi^2 }{5 (\gamma - 1)} > 0. 
\end{equation} 
Since $\gamma(1 - \beta) \leq 2$,  $(1+|x|)^{2- \gamma(1 - \beta)}$ is monotone increasing on $(0, \infty)$, from which the claimed drift condition follows.  
\end{proof}

\bibliographystyle{plainnat}
\bibliography{ref}

\end{document}